\newtheorem{thm}{Theorem}
\newtheorem{lem}[thm]{Lemma}
\newtheorem{prop}[thm]{Proposition}
\newcommand{\R}{{\mathord{\mathbb R}}}
\newcommand{\Sp}{{\mathord{\mathbb S}}}
\renewcommand{\L}{{\mathcal{L}}}
\renewcommand{\|}{{\Vert}}
\newcommand{\V}{\ensuremath{\vec v}}
\newcommand{\W}{\ensuremath{\vec w}}
\newcommand{\vxi}{\ensuremath{\vec \xi}}
\newcommand{\veta}{\ensuremath{\vec \eta}}
\newcommand{\G}{\Gamma}
\newcommand{\Xint}[1]{\mathchoice
{\XXint\displaystyle\textstyle{#1}}%
{\XXint\textstyle\scriptstyle{#1}}%
{\XXint\scriptstyle\scriptscriptstyle{#1}}%
{\XXint\scriptscriptstyle\scriptscriptstyle{#1}}%
\!\int}
\newcommand{\XXint}[3]{\setbox0=\hbox{$#1{#2#3}{\int}$ }
\vcenter{\hbox{$#2#3$ }}\kern-.6\wd0}
\newcommand{\dashint}{\Xint-}
\begin{document}
 
\title{Uniform Approximation of a Maxwellian Thermostat by Finite 
Reservoirs}
\author[1]{F. Bonetto}
\author[1]{M. Loss}
\author[1]{H. Tossounian}
\author[1]{R. Vaidyanathan}
\affil[1]{School of Mathematics, Georgia Institute of Technology, Atlanta}

\maketitle

\abstract{We study a system of $M$ particles in contact with a large but finite 
reservoir of $N>>M$ particles within the framework of the Kac master equation 
modeling random collisions.  The reservoir is initially in equilibrium at 
temperature $T=\beta^{-1}$. We show that for large $N$, this evolution can be 
approximated by an effective equation in which the reservoir is described by a 
Maxwellian thermostat at temperature $T$. This approximation is proven for a 
suitable $L^2$ norm as well as for the Gabetta-Toscani-Wennberg (GTW) distance 
and is {\it uniform in time}.}

\section{Introduction}\label{Intro}

In \cite{Kac}, Kac studied a spatially homogeneous gas of $M$ particles moving 
in one dimension and interacting through random collisions. After certain exponentially distributed time intervals,
 a pair of particles is randomly and uniformly 
selected and they undergo a random collision, i.e., their pre-collisional 
velocities are replaced by new velocities that are randomly and uniformly 
selected in such a way that the total energy is preserved. The intensity of the 
collision process is chosen so that the average time $\lambda^{-1}$ between two 
successive collisions of a given particle, i.e., the {\it mean free time},  is 
independent of the number of particles. Thus, the $M\to\infty$ limit of the 
model can be thought of as a realization of the classical Grad-Boltzmann limit. 

To keep the presentation simple we describe the Kac model first for the system 
of $M$ particles only and deal with the full model afterwards. The sub- and 
superscript $S$ refers to this system of $M$ particles. For a spatially 
homogeneous gas the state of the system is given by a function $f(\V)$, the 
probability density of finding the particles in the system with velocities 
$\V=(v_1,\ldots,v_M)$.  The infinitesimal generator of this evolution is given 
by (see \cite{CCL,Kac})
\begin{equation}\label{Kac}
\L_S[f]=\frac{\lambda_S}{M-1}\sum_{i<j}(R_{i,j}^S-I)[f]
\end{equation}
where $I$ is the identity operator and $R_{i,j}^S$ describes the result of a 
collision between particle $i$ and particle $j$, that is
\begin{equation}
R_{i,j}^S[f](\V):=\dashint f(\V_{i,j}(\theta))d\theta 
\end{equation}
with
\begin{eqnarray}\label{rot}
&\V_{i,j}(\theta):=(v_1,\ldots,v_i^*(\theta),\ldots,v_j^*(\theta),
\ldots , v_M)\crcr
&v_i^*(\theta) := v_i \cos{\theta} + v_j 
\sin{\theta}\qquad \,\qquad v_j^*(\theta) 
:= -v_i \sin{\theta} + v_j \cos{\theta}  \ ,
\end{eqnarray} 
and
\[
 \dashint f(\theta)d\theta:=\frac1{2\pi}\int_0^{2\pi} f(\theta)d\theta.
\]

The gain term $\frac{\lambda_S}{M-1}R_{i,j}$ in \eqref{Kac} implies that, in an 
interval of length $dt$, there is a probability $\frac{\lambda_S}{M-1}dt$ that 
particles $i$ and $j$ will collide with resulting velocities $v_i$ and $v_j$. 
Because every particle label appears exactly $M-1$ times in \eqref{Kac}, 
particle $i$ has a probability $\lambda_S dt$ of being involved in a collision 
during the time interval $dt$. Thus, on average, the time between two collisions 
involving particle $i$ is $\lambda_S^{-1}$. Since the above evolution is 
completely independent of the positions of the particles, and hence of their 
density, the mean free time is the only number of physical significance.

In \cite{BLV} a Kac-type model was introduced with the additional feature that, 
besides the pair collisions, each particle  in the system can interact with a 
thermostat. The interaction of particle $j$ with the Maxwellian thermostat is 
given by

\begin{equation}\label{B}
B_j[f](\V):= \int dw \dashint d\theta \sqrt{\frac \beta {2\pi}}
e^{-\frac \beta 2 w_j^{*2}(\theta)} f(\V_j(\theta,w)) 
\end{equation}
where
\begin{equation}\label{rott}
\V_j(\theta,w)=(v_1,..., v_j \cos{(\theta)} + w 
\sin{\theta},...,v_M), \qquad\qquad w_j^*(\theta) = -v_j \sin{\theta} + w 
\cos{\theta} \ .
\end{equation}

As before, the interaction times with the thermostat are described by a Poisson 
process whose intensity $\mu$ is chosen so that the average time between two 
successive interactions of a given particle with the thermostat is independent 
of the number of particles in the system $S$. Thus, the time evolution for this 
model is given by 

\begin{equation}\label{gent}
 \dot f= \widetilde \L[f]=\L_S[f]+\widetilde\L_T[f] \ ,
\end{equation}
where
\begin{equation}\label{thermo}
\widetilde \L_T[f]=\mu 
\sum_{j=1}^{M}{(B_j-I)}[f].
\end{equation}

In order to facilitate the discussion we will call this model the {\it Thermostated System}
or T-system in short.
The unique equilibrium distribution of this thermostated system is given by a 
Gaussian with inverse temperature $\beta$. In \cite{BLV} it is shown that the 
evolution approaches this equilibrium exponentially fast in $L^2$ as well as in 
entropy {\it uniformly in} $M$. Moreover, propagation of chaos 
\cite{McK} holds for this system as well and, as $M \to \infty$, 
the evolution of the single particle marginal is given by a  Boltzmann-type 
equation. These results have been extended to a system where only a subgroup of 
the particles interact with the thermostat in \cite{TV}.

The thermostat can be thought of as an infinite reservoir of particles at a 
fixed inverse temperature $T=\beta^{-1}$ in which every particle in the 
reservoir collides at most once with a particle in the system. Thus, 
$B_j[f](\V)$ describes a collision between a system particle and a reservoir 
particle that is randomly drawn from a Maxwellian distribution with temperature 
$\beta^{-1}$. The reservoir is not affected by the collisions with the particles 
from the system $S$. If the system $S$ interacts, instead, with a large but 
finite reservoir the reservoir does not remain in equilibrium. Particles in the 
reservoir can re-collide with system particles and with other reservoir 
particles, pushing more reservoir particles out of equilibrium.

In the present paper we compare, in appropriate metrics, the evolution \eqref{gent} 
with the evolution arising from the interaction of the system $S$ with a large 
but finite reservoir $R$ containing $N>>M$ particles.  This model is explained 
in Section \ref{MR}. In Section \ref{results} we state the main results of the 
paper, namely, that for $N$ large this evolution stays close uniformly in time 
to the one with an infinite reservoir.  Section \ref{Pr} contains the proofs of 
our results. Section \ref{conc} further addresses the relevance of our results 
together with possible extensions. Finally, in the Appendices, we report some 
technical computations and discuss the optimality of our bounds.

\section{A Model for a Finite Heat Reservoir}\label{MR}

The evolution inside the reservoir $R$ is also given by a standard Kac model. As 
above, we assume that the average time between two collisions between two 
particles in the reservoir $R$ is fixed independently of $N$. We denote this 
time by $\lambda_R^{-1}$. Thus, the generator of the evolution of the reservoir 
is
\begin{equation}\label{LR}
 \L_R[f]=
 \frac{\lambda_R}{N-1}\sum_{1\leq i<j\leq N} (R^R_{i,j}-I)[f] \ .
\end{equation}
Again, the quantities that refer to the reservoir have a sub- or superscript 
$R$. The evolution of the system $S$ and the reservoir $R$ {\it without 
interaction between the two} is determined by the generator
\begin{equation}\label{LK}
 \L_K[f]=\L_S[f]+\L_R[f]
\end{equation}
where $\L_S[f]$ is given by \eqref{Kac}. The velocities of the particles in the 
system $S$ are, as before, denoted by $v_1, \dots, v_M$ and the velocities of 
the particles in the reservoir by $w_1, \dots, w_N$. Similar to what we wrote 
before, $R^S_{i,j}$ describes a collision in the system $S$ between particle 
$i$ and $j$, and is given by (see \eqref{rot})
\[ 
R^S_{i,j}[f](\V,\W) := \dashint f(\V_{i,j}(\theta),\W)d\theta 
\]
and $R^R_{i,j}$ describing a collision in the reservoir between particle $i$ and 
$j$ is  written as \[
R^R_{i,j}[f](\V,\W) := 
\dashint f(\V,\W_{i,j}(\theta))d\theta
\]
with $\V_{i,j}(\theta)$ defined in \eqref{rot} and $\W_{i,j}(\theta)$ 
analogously defined.

Some thought has to be given to the modeling of the interaction between the 
system $S$ and the reservoir $R$. Naturally, we want that the average time 
between two successive collisions of a {\it given particle in the system $S$ 
with any particle in the reservoir $R$} to be fixed independently of $N$ and 
$M$. 
This is achieved by defining the interaction generator as
\begin{equation}\label{intera}
 \L_I[f]=\frac{\mu}{N}\sum_{i=1}^M\sum_{j=1}^N (R^I_{i,j}-I)[f]
\end{equation}
where
\[ 
R^I_{i,j}[f](\V,\W) :=\dashint f(\V_{i}(\theta),\W_j(\theta))d\theta,
\]
with
\begin{align}
&\V_{i}(\theta):=(v_1,\ldots,v_i^*(\theta),\ldots, 
v_M)\qquad \,\qquad \W_{j}(\theta):=(w_1,\ldots,w_j^*(\theta),\ldots, 
w_N)\crcr
&v_i^*(\theta) := v_i \cos{\theta} + w_j 
\sin{\theta}\qquad \,\qquad w_j^*(\theta) 
:= -v_i \sin{\theta} + w_j \cos{\theta}.
\end{align}

Thus, the evolution equation for the combined system $S$ and reservoir $R$ is given by 
\begin{equation}\label{genr}
 \dot f=\L[f]=\L_K [f]+\L_I [f] \ ,
\end{equation}
where  $f$ is a probability distribution in $L^1(\R^M\times\R^N)$. It is elementary so see that
this property is preserved under the evolution \eqref{genr}. We will call this model the {\it Finite Reservoir System} or FR-system in short.

It is plain that for an arbitrary initial distribution $f_0(\V, \W)$ the 
evolutions given by \eqref{genr} and \eqref{gent} need not be similar. The 
latter tends to an equilibrium given by Gaussian at temperature $\beta^{-1}$ 
whereas the former, as can be easily seen, tends to an equilibrium which is 
given by averaging $f_0(\V,\W)$ over all rotations in $\R^{M+N}$. Clearly, there 
is no reason why these two equilibria are close in any sense.  The choice of  
initial conditions plays a key role. We shall assume that initially the 
reservoir is in the canonical equilibrium at temperature $T=\beta^{-1}$, that 
is, the state of the reservoir is given by
\[
 \G_{\beta,N}(\W)=\prod_{i=1}^{N} \Gamma_{\beta,1}(w_i)
\qquad\hbox{where}\qquad
 \G_{\beta,1}(w)=\sqrt{\frac{\beta}{2\pi}}e^{-\frac{\beta}2 
w^2} \ .
\]
We assume that the system $S$ is initially  in a generic initial state $l_0(\V)$ with
$\int l_0(\V)d\V=1$.

It is easy to see that if the total momentum is initially zero, it remains zero 
for all times. Hence, we set it equal to zero.  Moreover, we assume that the 
average kinetic energy per particle in the system is finite.
The particles are assumed to be indistinguishable so that $l_0(\V)$ is invariant 
under permutation of its variables. This implies that
\[
  \int v_i l_0(\V)d\V=0 \qquad \int |v_i|^2 l_0(\V)d\V=E_2<\infty \qquad 
\forall i.
\]
Finally, by a simple rescaling of the velocities, we can assume without loss 
of generality that $\beta=2\pi$. Thus, the initial distribution of the 
{\it system plus reservoir}, is given by
\begin{equation}\label{ini}
 f_0(\V,\W)=l_0(\V)\G_N(\W).
\end{equation}
where $\G_N(\W)=\G_{2\pi,N}(\W)$.

The evolution given by $\widetilde \L$, defined in \eqref{gent}, does not act on 
the $\W$ variables and with a slight abuse of notation we will consider 
$\widetilde \L$ as an operator acting on functions $f(\V,\W)$ of both $\V$ and 
$\W$, leaving the dependence on $\W$ unchanged. It will be sometimes convenient 
to replace the generator $\widetilde \L$ by $\widetilde \L + \L_R$. This 
substitute is legitimate, since the operator $\L_R$ leaves the reservoir at 
equilibrium.

The similarity of the two evolutions, the one given by \eqref{genr} with the one 
in \eqref{gent} acting on the same initial state \eqref{ini},  can be 
heuristically understood as follows. The form of the interaction term implies 
that, in contrast to the collisions between system particles,  the mean time 
between two successive collisions of a {\it given particle in the reservoir $R$ 
with any particle in the system $S$} is $\mu^{-1}N/M$ and thus it diverges with 
$N$. This implies that for a finite time $t$ and for $N$ very large, with 
respect to $t$, we can indeed assume that each particle in the reservoir 
collides at most once with a particle in the system. This idea is implemented 
through the choice of \eqref{B}. Thus, it is not difficult to prove a 
convergence result for any fixed time $t$, as $N\to\infty$. The interesting 
point, however, is that over longer times re-collisions will occur. Moreover the 
interaction $\L_R$, the collisions among the particles in the reservoir, spreads 
the modification of the distribution of one particle to all the reservoir 
particles. Thus, after a time approaching $N$, we can no more think that a 
randomly selected particle from the reservoir has a Maxwellian distribution. 
Thus, the real issue is to understand these competing effects in order to obtain 
a result uniformly in time. From a physical point of view such a result can be 
expected, because the thermostat is introduced to drive the system as $t \to 
\infty$ to a particular equilibrium state. 

\section{Results} \label{results}


We will always assume that the initial state $f_0$ for FR-system is of 
the form \eqref{ini}, that is, the system $S$ is in a generic initial state 
while the reservoir $R$ is in equilibrium at inverse temperature $\beta=2\pi$. The state at time $t$
of the FR-system is given by
\[
 f_t=e^{\L t}f_0 \ .
\]
As noted above, $f_t$ reaches a {\it steady state} $f_{\infty}$ when 
$t\to\infty$ and that we get:
\begin{equation}\label{ssr}
f_{\infty}(\V,\W)= \lim_{t\to 
\infty}f_t(\V,\W)=\int_{\Sp^{M+N-1}(r)} 
l_0(\V') \G_N(\W') d \sigma_r(\V',\W')
\end{equation}
where $r=\sqrt{|\V|^2+|\W|^2}$ and $\sigma_r(\V,\W)$ is the normalized uniform 
measure on the sphere of radius $r$ in $\R^{M+N}$.

We want to compare the evolution generated by $\L$ with the evolution generated by 
$\widetilde \L$, the generator for the T-system
(see \eqref{gent}). In order for them to be comparable, we think of $\widetilde \L$ as acting on 
functions of $M+N$ variables. Given an initial state $f_0$ of the form 
\eqref{ini}, let
\[
 \tilde f_t=e^{\widetilde \L t}f_0
\]
be the state of the T-system at time $t$, where clearly we have 
$\tilde f_t(\V,\W)=l_t(\V)\Gamma_N(\W)$. Any comparison between 
$f_t$ and $\tilde f_t$ will naturally yield an estimate on how much the reservoir 
deviates from its initial equilibrium state. 
Because $\L_R\Gamma_N=0$, for an initial state $f_0$ of the form \eqref{ini}, 
we can write (see \eqref{LK})
\[
 \widetilde \L=\widetilde \L_T+\L_K.
\]
This modification clearly does not change the evolution of $f_0$, but simplifies some of the 
computations below.
As $t \to \infty$,  $\tilde f_t$ approaches a steady state $\tilde f_{\infty}$  given by
\begin{equation}\label{sst}
 \tilde f_{\infty}(\V,\W)=\lim_{t\to \infty}\tilde f_t(\V,\W)=\G_{M+N}(\V,\W).
\end{equation}
It is worth observing that \eqref{ssr} and \eqref{sst} remain valid even when 
$\lambda_R=\lambda_S=0$. 

%
%

As a first attempt given in Sec. \ref{L2}, we will compare the above evolutions in the space 
$L^2(\mathbb{R}^M\times \mathbb{R}^N,\Gamma_{M+N})$. 
Since $f_0$ is a probability distribution, such an $L^2$ norm is not very 
natural, however, the computations are relatively simple. After discussing the limitations of the results in $L^2$ , we will, in Sec.\ref{GTW}, compare the evolutions in the 
Gabetta-Toscani-Wennberg (GTW) metric (see \cite{GTW}). This metric is more
natural but the computations are quite difficult.

\subsection{Evolution in $L^2(\mathbb{R}^{M+N},\Gamma_{M+N})$}\label{L2}

As discussed in \cite{BLV}, it is natural to look at the evolution in the 
ground state representation by defining
\[
f_t(\V,\W) = h_t(\V,\W)\G_{M+N}(\V,\W)
\]
where
\[
f_0(\V,\W) = h_0(\V)\G_{M+N}(\V,\W)
\]
with $\int h_0(\V)\G_N(\V) d\V=1$ while $\int v_i h_0(\V)\G_N(\V) d\V=0$ 
and $\int |v_i|^2 h_0(\V) \G_N(\V)d\V=E_2$, for every $i$.

Observe that $\L_K$ (see \eqref{LK}) has the same form when 
acting on $f$ or on $h$. More precisely we have that
\[
 \L_K[\Gamma_{M+N}h]=\Gamma_{M+N}\L_K[h].
\]
This easily follows from the fact that $\Gamma_{M+N}$ is a rotationally invariant 
function. On the other hand, in the case of the thermostat we 
have to note that
\[
 B_i[\Gamma_{M+N}h]=\Gamma_{M+N}T_i[h]
\]
where $B_i$ is given by \eqref{B} while
\begin{equation}\label{Ti}
 T_i[f]=\int dw  
e^{-\pi w^{2}} \dashint f(\V_i(\theta,w))d\theta \ .
\end{equation}
This means that the evolution of the initial state $h_0$ under the 
thermostated evolution can be written has
\[
 \tilde h_t=e^{\overline \L t}h_0
\]
where
\[
\overline \L[h]=\L_K[h]+\L_T[h]
\]
with
\[
\L_T[h]=\mu\sum_{i=1}^M(T_i-I)[h]\ .
\]
Recall that $\L_S+\L_T$ acts only on the $\V$ variables while $\L_R$ acts only 
on the $\W$ variables. Thus, if $h_0$ depends only on $\V$ then $e^{\overline \L 
t}h_0$ will depend only on $\V$ too. It follows that the term $\L_R$ is 
identically zero along the evolution of the chosen initial state. We keep it for
future comparison with $\L$. Note that 
$\L[h\G_{M+N}]=\L[h]\G_{M+N}$ and hence the generator of the evolution for the 
FR-system requires no modifications.

It is easy to see that $\L$ and $\overline \L$ are bounded self-adjoint operators 
on $L^2(\mathbb{R}^{M+N},\Gamma_{M+N})$ with the scalar product
\begin{equation}\label{sp}
 \langle f,g\rangle=\int f(\V,\W)g(\V,\W)\Gamma_{M+N}(\V,\W)d\V d\W.
\end{equation}
Thus, it is natural to assume that $h_0\in L^2(\R^{M+N},\G_{M+N}(\V,\W))$ and  
to study the evolution of $\|e^{\overline\L t}h_0-e^{\L t}h_0\|_2$.

As a first step we estimate the behavior of the difference of the steady states.
We clearly have
\[
f_{\infty}(\V,\W)=\G_{M+N}(\V,\W) h_{\infty}(\V)
\]
with
\[
 h_{\infty}(\V,\W)=\int_{\mathbb{S}^{M+N-1}(r)}h(\V)d\sigma_r(\V,\W)
\]
whereas $\tilde h_{\infty}\equiv 1$. In Appendix \ref{sL2eq}, we show that

\begin{equation}
\label{L2eq}
\|h_{\infty}-\tilde h_{\infty}\|_2^2=\int_{\R^{M+N}} [h_{\infty}(\V,\W)-1]^2 
\Gamma_{M+N}(\V,\W) d\V d\W 
\leq \frac{M}{N-2} \vert\vert h_0 - 1 \vert\vert_2^2
\end{equation}
Thus, the distance between the steady states is controlled by the distance 
between the initial state and the canonical equilibrium state and it vanishes 
as $1/ \sqrt N$ as $N\to \infty$. This estimate, in a slightly weaker form, 
remains true for all $t$. 

\begin{thm}\label{thL2}
 Let $f_0$ be the initial distribution for the system with reservoir and assume 
that it has the form
\begin{equation}\label{ini2}
 f_0(\V,\W)=h_0(\V)\Gamma_{M+N}(\V,\W)
\end{equation}
with $h_0\in L^2(\R^{M+N},\G(\V,\W))$. Then for every $t>0$ we have
\begin{equation}\label{estL2}
 \|e^{\overline\L t}h_0-e^{\L t}h_0\|_2\leq \frac{M}{\sqrt 
N}(1-e^{-\frac\mu 2 t})\Vert 
h_0-1\Vert_2\ .
\end{equation}
\end{thm}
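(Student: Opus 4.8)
The plan is to use the Duhamel (variation of constants) formula to write the difference $e^{\overline\L t}h_0 - e^{\L t}h_0$ as a time integral of $(\overline\L - \L)$ applied to one of the two semigroups, and then to control the resulting integrand using the two key structural facts: both $\overline\L$ and $\L$ are bounded self-adjoint operators on $L^2(\R^{M+N},\G_{M+N})$ that are \emph{negative semi-definite} with $1$ in their kernel, and the perturbation $\overline\L - \L = \L_T - \L_I$ has a special algebraic form when evaluated on functions that depend only on $\V$ (which is the case for $e^{\overline\L t}h_0$ when $h_0 = h_0(\V)$). Concretely, I would write
\[
 e^{\overline\L t}h_0 - e^{\L t}h_0 = \int_0^t e^{\L(t-s)}\big(\L - \overline\L\big)\,e^{\overline\L s}h_0\, ds,
\]
so that by the triangle inequality and the fact that $\|e^{\L(t-s)}(g-1)\|_2 \le \|g-1\|_2$ (contraction on the orthogonal complement of the kernel, since $\L\le 0$ and $\L 1 = 0$), it suffices to bound $\|(\overline\L - \L)\,e^{\overline\L s}h_0\|_2$ by $\frac{M}{\sqrt N}\,(\text{something integrating to }1-e^{-\mu t/2})\cdot\|h_0-1\|_2$. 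One should also first reduce to the case $\int h_0 \G = 1$ being replaced by working with $h_0 - 1$ throughout, using that both evolutions fix constants.

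The second step is the heart of the matter: estimating the operator $\overline\L - \L = \L_T - \L_I$ acting on a function $g_s := e^{\overline\L s}h_0$ that depends only on $\V$. Since $g_s$ depends only on $\V$, the term $\L_R$ drops out of both generators and the $R^R_{i,j}$ part of $\L$ is absent, so $\overline\L - \L$ reduces to $\L_T - \L_I$ evaluated on such $g$. I would compare $T_i$ (see \eqref{Ti}) with $\frac1N\sum_{j=1}^N R^I_{i,j}$: both mix $v_i$ with a "bath variable" via a rotation, but in $T_i$ the bath variable is integrated against $e^{-\pi w^2}$ (a single fresh Gaussian draw), whereas in $\L_I$ it is one of the $N$ reservoir coordinates $w_j$ — which, because $g_s$ does not depend on $\W$ and the ambient measure on the $\W$-block is exactly $\G_N$, is \emph{also} effectively a fresh Gaussian draw. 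The point is that $T_i[g]$ and $\frac1N\sum_j R^I_{i,j}[g]$ would be \emph{equal} as functions were it not for the fact that $R^I_{i,j}$ also moves $w_j$, and $\L_I$ has the extra factor $\mu/N$ versus the factor $\mu$ in $\L_T$ together with the self-interaction subtraction $-I$. After carefully matching terms, the difference should collapse to something of size $O(M/N)$ in the appropriate norm, with the extra decaying factor coming from the decay of $\|g_s - 1\|_2 \le e^{-cs}\|h_0-1\|_2$ for a suitable spectral gap $c$ of $\overline\L$ — and the constant $\mu/2$ in the exponent strongly suggests that the relevant gap estimate is $\|e^{\overline\L s}h_0 - 1\|_2 \le e^{-\mu s/2}\|h_0-1\|_2$, i.e. the thermostat alone already contracts at rate $\mu/2$ (consistent with the spectral analysis in \cite{BLV} where $T_i$ has eigenvalue $1/2$ on linear functions). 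The $ds$-integral of $e^{-\mu s/2}$ is then $\frac{2}{\mu}(1-e^{-\mu t/2})$, and this factor of $\frac2\mu$ must cancel against a factor $\mu$ extracted from $\|(\overline\L-\L)g_s\|_2$, leaving precisely the claimed bound.

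The main obstacle I anticipate is the precise bookkeeping in the second step: one must show that when $(\L_T - \L_I)$ acts on a $\V$-only function, everything \emph{except} an error of operator-norm size $\lesssim \mu M/N$ cancels exactly, and — crucially — that this error is of the form $\mu M/N$ times an operator that is a contraction (or at least uniformly bounded) on the orthogonal complement of constants, so that applying it to $g_s - 1$ gives $\lesssim \mu M/N \cdot \|g_s-1\|_2$ rather than just $\lesssim \mu M/N \cdot \|g_s\|_2$. Showing the cancellation requires writing $T_i$ and $R^I_{i,j}$ with explicit rotation variables and using the Gaussian integration-by-parts / rotation-invariance of $\G_N$ in the $\W$-block to identify the "fresh draw". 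A secondary technical point is justifying the spectral-gap bound $\|e^{\overline\L s}(h_0-1)\|_2\le e^{-\mu s/2}\|h_0-1\|_2$ uniformly in $M$ and $N$; this should follow from self-adjointness together with the lower bound $-\overline\L \ge \tfrac{\mu}{2}(I - P)$ where $P$ projects onto constants, which in turn reduces to the single-site estimate $\langle (I-T_i)g,g\rangle \ge \tfrac12\langle (I - \mathbb{E}_i)g,g\rangle$ (with $\mathbb{E}_i$ the conditional expectation over $v_i$), a computation carried out in \cite{BLV}. With these two ingredients in place the Duhamel estimate closes exactly as above, giving \eqref{estL2}.
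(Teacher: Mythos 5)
Your outer architecture is sound and is essentially a continuous-time version of what the paper does: the paper expands $e^{\L t}-e^{\overline\L t}$ as $e^{-\Lambda t}\sum_n\frac{t^n}{n!}\sum_k(Q_S+Q_R+Q_I)^{n-1-k}(Q_I-Q_T)(Q_S+Q_R+Q_T)^k$, which is exactly a discretized Duhamel formula, and uses the same three ingredients you list: boundedness/contractivity of the FR-semigroup (eq.\ \eqref{afterL2}), the gap $\mu/2$ for the thermostated semigroup on $\{1\}^\perp$ coming from the eigenvalues $a(n)\le 1/2$ of $T_i$ (eq.\ \eqref{beforeL2}), and a one-step estimate on $(Q_I-Q_T)$ applied to $\V$-only functions orthogonal to $1$ (eq.\ \eqref{middleL2}). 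Your bookkeeping $\frac{\mu}{2}\frac{M}{\sqrt N}\cdot\frac{2}{\mu}(1-e^{-\mu t/2})$ closes to the stated bound.

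The genuine gap is in the step you yourself flag as the heart of the matter, and your description of it is wrong in both the exponent of $N$ and the mechanism. There is no ``exact cancellation up to an error of operator-norm size $\mu M/N$''. The correct statement (Lemma \ref{fl}) is an identity for the \emph{squared} norm,
\[
\Bigl\Vert \frac1N\sum_{j=1}^N R^I_{i,j}u-T_iu\Bigr\Vert_2^2=\frac1N\bigl(\langle T_iu,u\rangle-\langle T_iu,T_iu\rangle\bigr),
\]
valid for $u$ depending only on $\V$. The gain over the trivial bound is a law-of-large-numbers/variance effect, not a cancellation: for $j\neq k$ the cross term $\langle R^I_{i,j}u,R^I_{i,k}u\rangle$ equals $\Vert T_iu\Vert_2^2$ exactly, because $w_j$ and $w_k$ are independent Gaussian bath variables each of which, integrated out, reproduces $T_i$; the surviving diagonal terms use that $R^I_{i,j}$ is an orthogonal projection and contribute the $\frac1N\langle T_iu,u\rangle$. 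The resulting per-particle error is therefore of size $N^{-1/2}$, not $N^{-1}$, and after summing over $i=1,\dots,M$ with Cauchy--Schwarz one gets $\Vert(Q_I-Q_T)u\Vert_2\le\frac{\mu}{2}\frac{M}{\sqrt N}\Vert u\Vert_2$ for $u\perp 1$ (the factor $\frac12$ comes from $\sup_k(\rho_k-\rho_k^2)\le\frac14$ over the nontrivial eigenvalues $\rho_k\le\frac12$ of $T_i$, which is also exactly where the restriction to $u\perp 1$ is used). Appendix \ref{M} of the paper shows this $M/\sqrt N$ scaling is saturated, so your anticipated $O(M/N)$ bound is false; it is only because the theorem claims $M/\sqrt N$ rather than $M/N$ that your Duhamel scheme still closes once the correct one-step estimate is substituted.
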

This statement is proved in Section \ref{pL2ev}. 

\if
The main 
observation in the proof is that, on a distribution of the form \eqref{ini2}, 
the actions of $\widetilde\L_T$ (see \eqref{thermo}) and $\L_I$ (see 
\eqref{intera}) are very similar, since the particles in the reservoir have a 
Maxwellian distribution. Clearly, as already observed, the form \eqref{ini2} is 
not preserved by the evolution. We will thus introduce a suitable expansion of 
the difference of the evolutions that will allow us to exploit the above 
observation at all times.
\fi
We close this section with some remarks about the meaning of Theorem \ref{thL2}.
In view of the estimate on the steady states, we see that the dependence on $N$ 
in \eqref{estL2} is optimal. Observe that the particles in the  
reservoir of the FR-model are at thermal equilibrium at time 0 and then evolve to a radially 
symmetric state for large time. Hence it is not surprising that the final 
state is close to a canonical distribution. Thus, the fact the 
their state remains close to a canonical distribution {\it uniformly in time} is the 
main point of the above theorem.

Observe that the dependence of the estimate on $M$ during the evolution is 
not the same as in the steady state. It is not clear to us whether this is an 
artifact of our proof. The main ingredient in the proof is the estimate 
\eqref{middleL2}. In Appendix \ref{M}, we show that this estimate is optimal in 
its $M$ behavior. This implies that the time derivative at $t=0$ of 
$\|e^{\overline\L t}h_0-e^{\L t}h_0\|$ can actually be $M/\sqrt N$. But this may only 
be true for a very small time.

A disturbing aspect of the theorem is that it behaves very poorly when applied 
to some very reasonable initial distributions. Assume that the system is 
initially in equilibrium at a temperature $T_S=\beta_S^{-1}\not=\beta^{-1}$, 
that is $f_0(\V)=\Gamma_{\beta_S,M}(\V)\Gamma_{\beta,M}(\W)$. It follows that 
$h_0(\V)=\Gamma_{\beta_S,M}(\V)/\Gamma_{\beta,M}(\V)$. If $2\beta_S\geq\beta$ then $\Vert 
h_0\Vert_2=C(\beta_S)^M$ where 
$C(\beta_S)^2=\beta_S/\sqrt{\beta(2\beta_s-\beta)}>1$. Thus, if the right hand side of \eqref{estL2} is to be small for such an initial 
state, we need a 
reservoir with a number of particles $N$ exponentially large in $M$. In a sense, 
this makes the behavior in $M$ discussed above rather unimportant. Also, if the 
initial temperature is sufficiently large, that is if $2\beta_S\leq\beta$, then 
$C(\beta_S)=\infty$, $h_0\not\in L^2(\mathbb{R}^M,\Gamma_M(\V))$ and our theorem
does not apply in this situation.
These are, perhaps, the main reasons why the Gabetta-Toscani-Wennberg metric is 
better suited for our purposes, although it is quite a bit more difficult to 
handle. 

\subsection{The Gabetta-Toscani-Wennberg metric}\label{GTW}

The Gabetta-Toscani-Wennberg (GTW) 
metric is a distance between probability densities.  Let $f,g\in L^1(\mathbb{R}^{M+N})$ be two possible distributions for the FR-system where
\begin{equation}\label{condd2}
\int v_i 
f(\V,\W)d\V d\W=\int w_j f(\V,\W)d\V d\W=0\qquad
\int v_i^2 f(\V,\W)d\V d\W,\int w_j^2 
f(\V,\W)d\V d\W<\infty
\end{equation}
and analogously for $g$. We can define then
\begin{equation}\label{dd2}
d_2(f,g) := \sup_{\vxi \not= 0, \veta \not= 0} \frac{|\widehat f 
(\vxi,\veta) - 
\widehat g(\vxi,\veta)|}{|\vxi|^2+|\veta|^2}.
\end{equation}
Here, and in the following, we use the convention that $\widehat f$, 
the Fourier transform of $f$, is given by
\[
 \widehat 
f(\vxi,\veta)=\int_{\R^{M+N}}e^{-2\pi i(\vxi,\V)}e^{-2\pi i(\veta,\W)}f(\V,\W)d\V d\W,
\]
where $\vxi=(\xi_1,\ldots,\xi_M)$ are the Fourier variables associated with the 
particles in the system $S$, while $\veta=(\eta_1,\ldots,\eta_N)$ are the 
Fourier variables associated with the particles in the reservoir $R$. It is easily seen that
under the stated conditions, $d_2(f,g)$ is defined. The 
metric $d_2$ in \eqref{dd2} is the more interesting member of a family of 
metrics 
$\{d_\alpha\}$ introduced in \cite{GTW}.

Again we imagine that our system starts at time 0 in a state of the form 
\[
f_0(\V,\W)=l_0(\V)\G_N(\W)
\]
and we want to estimate the $d_2$ distance between $f_t=e^{\L t}f_0$ 
and $\tilde f_t=e^{\widetilde \L t}f_0$. To see what kind of behavior to expect, 
we start from the distance between the steady states.
Because the Fourier transform commutes with rotations we find
\[
\widehat f_{\infty}(\vxi,\veta) = \int_{\Sp^{M+N-1}(r)} \widehat l_0(\vxi) 
\G_N(\veta)  d\sigma_r(\vxi,\veta) 
\]
and
\[
\widehat{\tilde f}_{\infty}(\vxi,\veta)=\G_{M+N}(\vxi,\veta)
\]
where we have used that $\Gamma_1$ is invariant under the Fourier transform.
In Appendix \ref{sd2eq}, we show that

\begin{equation}\label{d2eq}
d_2(f_{\infty},\tilde f_{\infty}) \le  \frac{M}{ M+N } 
d_2( l_0, \Gamma_M).
\end{equation}

Again we want to obtain an estimate that remains true uniformly in time. In 
Section \ref{pd2ev}, we prove the following. 

\begin{thm}\label{thd2}
 Let $f_0(\V,\W)$ be the initial distribution for the system plus reservoir of 
the form
\[
 f_0(\V,\W)=l_0(\V)\G_N(\W).
\]
with $l_0$ symmetric and satisfying \eqref{condd2}.
Assume moreover that the fourth moment
\begin{equation}\label{fourth}
\int v_i^4 l_0(\V)d\V=E_4<\infty \ .
\end{equation}
Then for every $t>0$ we have
\begin{equation}\label{estd2}
 d_2\left(e^{\widetilde\L t} f_0, e^{\L t} f_0\right)\leq 
\frac{KM}{N}\left(1-e^{-\frac\mu 4 
t}\right)\sqrt{d_2(l_0,\Gamma_M)(F_4+d_2(l_0, \Gamma_M)) }\, .
\end{equation}
with $F_4=48\pi^4 (E_4+1)$ and $K=16\sqrt{2}$.
\end{thm}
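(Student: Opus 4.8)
The plan is to mimic the Duhamel-expansion strategy used for Theorem~\ref{thL2}, but carried out on the Fourier side and measured in the $d_2$ metric. Write $D_t := e^{\L t}f_0 - e^{\widetilde\L t}f_0$. Since both evolutions start at the same state $f_0 = l_0\G_N$, we have $D_0=0$, and by Duhamel's formula applied to $\partial_t D_t = \L D_t + (\L - \widetilde\L)e^{\widetilde\L t}f_0$,
\[
D_t = \int_0^t e^{\L(t-s)}\bigl(\L - \widetilde\L\bigr)e^{\widetilde\L s}f_0\, ds.
\]
The key algebraic point—exactly as in the $L^2$ case—is that on a state whose reservoir marginal is the Maxwellian $\G_N$, the interaction generator $\L_I$ (see \eqref{intera}) and the thermostat generator $\widetilde\L_T$ (see \eqref{thermo}) act almost identically: integrating $R^I_{i,j}$ against $\G_{2\pi,1}(w_j)\,dw_j$ reproduces $B_i$. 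Writing $\widetilde\L = \widetilde\L_T + \L_K$ and $\L = \L_I + \L_K$, and using that $e^{\widetilde\L s}f_0 = l_s(\V)\G_N(\W)$ keeps the reservoir Maxwellian for all $s$, the source term $(\L-\widetilde\L)e^{\widetilde\L s}f_0 = (\L_I - \widetilde\L_T)[l_s\G_N]$ collapses to a manageable expression. This is where the $1/N$ gain will come from: $\L_I$ has the prefactor $\mu/N$ summed over $N$ reservoir indices, and after integrating out the (still-Maxwellian) reservoir velocity the "diagonal" part matches $\widetilde\L_T$, leaving only a discrepancy of relative size $1/N$ (heuristically, the $j$-th reservoir collision feeds back into the reservoir rather than disappearing).

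The main technical work is to estimate this residual source term in $d_2$. I expect to bound $d_2\bigl((\L_I-\widetilde\L_T)[l_s\G_N]\bigr)$ by something like $\frac{CM}{N}e^{-c\mu s}\cdot(\text{moment factors})$. Two ingredients are needed: (i) control of how $d_2$ interacts with the collision operators—one needs that $R^I_{i,j}$, $B_i$, and the averaging over $\theta$ are contractions or near-contractions in $d_2$, which follows because the GTW metric was built to be non-expansive under Kac-type collisions (the rotation by $\theta$ in $(\xi_i,\eta_j)$ preserves $|\xi_i|^2+|\eta_j|^2$, so dividing by $|\vxi|^2+|\veta|^2$ and taking sup is controlled); and (ii) the exponential decay factor $e^{-c\mu s}$ in front, which should come from the fact that $l_s \to \G_M$ at rate governed by $\mu$—here the relevant quantity is $d_2(l_s,\G_M)$, which the T-system contracts, so the source term is proportional to $d_2(l_s,\G_M)\le e^{-c\mu s}d_2(l_0,\G_M)$ (or a comparable quantity). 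The fourth-moment hypothesis \eqref{fourth} enters because $d_2$-estimates on quadratic-in-Fourier quantities require controlling $\widehat{l_s}$ near the origin to fourth order; propagation of the fourth moment under $\widetilde\L$ gives the constant $F_4 = 48\pi^4(E_4+1)$, and the $\sqrt{d_2(l_0,\Gamma_M)(F_4 + d_2(l_0,\Gamma_M))}$ shape suggests the bound is genuinely of the form $\sqrt{d_2 \cdot (\text{fourth moment} + d_2)}$—likely a Cauchy–Schwarz split of the residual into a piece linear in the perturbation and a piece carrying the moment.

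Assembling: once the source term is bounded by $\frac{C M}{N} e^{-\frac{\mu}{4}s}\sqrt{d_2(l_0,\Gamma_M)(F_4+d_2(l_0,\Gamma_M))}$, I apply the triangle inequality to the Duhamel integral, use that $e^{\L(t-s)}$ is a (near-)contraction in $d_2$ for the same reason its constituent collision operators are, and integrate $\int_0^t e^{-\frac{\mu}{4}s}\,ds = \frac{4}{\mu}\bigl(1-e^{-\frac{\mu}{4}t}\bigr)$. The factor $4/\mu$ combines with the constants to produce $K = 16\sqrt{2}$ and the claimed time envelope $1-e^{-\frac{\mu}{4}t}$. The decay rate $\mu/4$ rather than $\mu/2$ presumably reflects the $d_2$ contraction rate of the thermostat being weaker than the $L^2$ spectral gap, or the need to split $\mu$ between the source decay and the Duhamel semigroup factor.

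The step I expect to be the main obstacle is establishing the precise $d_2$ estimate on the residual $(\L_I - \widetilde\L_T)[l_s\G_N]$ with the right $1/N$, the right $M$, and the right moment dependence simultaneously. The cancellation between $\L_I$ and $\widetilde\L_T$ is exact only at the level of the "first collision" of each reservoir particle; quantifying the leftover requires carefully tracking, in Fourier variables, the difference between integrating a reservoir velocity against $\G_{2\pi,1}$ versus leaving it as a full $N$-particle variable, and then dividing by $|\vxi|^2+|\veta|^2$ and taking the supremum. Handling the regime where $\vxi$ or $\veta$ is small (so the denominator is tiny) is exactly where the fourth moment and the $\sqrt{\cdot}$ structure must be used—this is the delicate point, since one cannot afford to lose the quadratic vanishing at the origin. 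Everything else (Duhamel, contractivity of the semigroups in $d_2$, the time integral) should be routine once that estimate is in hand.
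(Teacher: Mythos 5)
Your outline reproduces the paper's architecture in broad strokes: an expansion of the difference of the two semigroups (the paper uses the discrete form \eqref{expad2}, $e^{\L t}-e^{\widetilde\L t}=e^{-\Lambda t}\sum_n \frac{t^n}{n!}\sum_k (Q_S+Q_R+Q_I)^{n-1-k}(Q_I-Q_B)(Q_S+Q_R+Q_B)^k$, which is the power-series avatar of your Duhamel integral), $d_2$-contractivity of the gain operators (Lemma \ref{RB}), the near-cancellation of $\L_I$ and $\widetilde\L_T$ on states of the form $l\,\G_N$, propagation of the fourth moment under the thermostated flow (Proposition \ref{Hi}), and the resummation in which $d_2(l_k,\G_M)\le(1-\tfrac{\mu}{2\Lambda})^{k}d_2(l_0,\G_M)$ enters under a square root and yields the $\tfrac{4}{\mu}(1-e^{-\mu t/4})$ envelope. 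All of that matches the paper.

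The genuine gap is precisely the step you flag as ``the main obstacle'' and then leave unresolved. On the Fourier side the residual is $\frac{\mu}{N}\sup_{\vxi,\veta\neq0}\frac{1}{|\vxi|^2+|\veta|^2}\bigl|\sum_{j=1}^N\widehat G_k(\vxi,\eta_j)\Gamma_{N-1}(\veta^j)\bigr|$ with $\widehat G_k(\vxi,0)=0$, and the difficulty is not merely ``the denominator being small'': the elementary bound \eqref{eq:crude} on this $N$-fold supremum already gives an $N$-independent quantity, but it is ${\cal D}_1(\widehat G_k(\vxi,\cdot),0)=\sup_\eta|\widehat G_k(\vxi,\eta)|/\eta^2$, which cannot be related to $d_2(l_k,\G_M)$; what \eqref{1to0} controls is ${\cal D}_1(\widehat G_k(\vxi,\cdot),|\vxi|)$, the supremum with denominator $|\vxi|^2+\eta^2$. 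Since ${\cal D}_1(H,a)\le{\cal D}_1(H,0)$ with no reverse inequality in general, one needs the functional inequality of Proposition \ref{estim}, ${\cal D}_N(H,a)\le[(8C_4+{\cal D}_1(H,a)){\cal D}_1(H,a)]^{1/2}$, whose proof requires locating the maximizing configuration of $\veta$ (Lemma \ref{secondstep}) and a reverse bound ${\cal D}_1(H,a)\ge 2{\cal D}_1(H,0)^2/(C_4(3a^2+4))$ obtained from a fourth-order Taylor expansion (Lemma \ref{firststep}) --- this is where hypothesis \eqref{fourth} really enters, and it is also where the square root in \eqref{estd2} comes from: not from a Cauchy--Schwarz split of the residual, as you suggest, but from the quadratic relation between ${\cal D}_1(H,0)$ and ${\cal D}_1(H,a)$. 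Without this mechanism the argument does not close: bounding each of the $N$ terms separately loses the $1/N$, while the crude supremum loses the connection to $d_2(l_k,\G_M)$. The rest of your sketch is sound.
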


The basic strategy of the proof of this theorem is similar to the one used for the proof of Theorem 
\ref{thL2}. Having said this, estimating the difference between 
$\widetilde \L_T$ and $\L_I$ in the $d_2$ metric turns out to be considerably more 
difficult than the one in the $L^2$ norm. Most of the work in the proof of 
Theorem \ref{thd2} in Section \ref{pd2ev} is devoted to carrying out these estimates which are summarized in Proposition 
\ref{estim}. It is really in the proof of Proposition \ref{estim} that the extra 
condition \eqref{fourth} on the fourth order moment of the initial 
distribution is needed. In Appendix \ref{M} we show that such a condition is 
indeed necessary for our proof.

We observe that $d_2(l_0,\Gamma_M)$ is well defined for any $l_0$ satisfying 
\eqref{condd2}. Moreover, if $l_0$ is a product state, that is if 
\[
l_0(\V)=\prod_{i=1}^M \ell(v_i)           
\]
then, calling 
$\vxi^{<i}=(\xi_1,\ldots,\xi_{i-1})$, $\vxi^{>i}=(\xi_{i+1},\ldots,\xi_M)$ and 
$\widehat l^{>i}_0(\vxi^{>i})=\prod_{j>i} \widehat \ell(v_j)$, we get   
\[
 \frac{|\Gamma_M(\vxi)-\widehat l_0(\vxi)|}{|\vxi|^2}\leq
\frac{\sum_i 
\Gamma_{i-1}(\vxi^{<i})\,\left|\Gamma_1(\xi_i)-\widehat\ell(\xi_i)\right|\,
\widehat l^{>i}_0(\vxi^{>i})}{\sum_i \xi_i^2}\leq
\sup_i \frac{|\Gamma_1(\xi_i)-\widehat\ell(\xi_i)|}{\xi_i^2}
\]
so that
\[
 d_2(l_0,\Gamma_M)=d_2(\ell,\Gamma_1).
\]
These observations address both problems found in the $L^2$ estimate.

\section{Proof of Theorem \ref{thL2} and Theorem \ref{thd2}}\label{Pr}

Both proofs are based on an expansion of the difference between two 
exponentials that we discuss here in the form needed for the $L_2$ estimates. A 
very similar expansion can be obtained for the $d_2$ case.

Observe that we can write
\begin{align}
 \L=&Q_S+Q_R+Q_I-\Lambda I\crcr
 \overline \L=&Q_S+Q_R+Q_T-\Lambda I
\end{align}
where 
\[
\Lambda=\frac{\lambda_S}{2}M+\frac{\lambda_R}{2}N+\mu M
\]
while
\[
 Q_S=\frac{\lambda_S}{M-1}\sum_{1\leq i<j\leq M} R^S_{i,j}\qquad\qquad 
Q_R=\frac{\lambda_R}{N-1}\sum_{1\leq i<j\leq M} R^R_{i,j}\ .
\]
Finally,
\[
 Q_I=\frac{\mu}{N}\sum_{i=1}^M\sum_{j=1}^N R^I_{i,j}\qquad\qquad 
Q_T=\mu\sum_{i=1}^M T_i \ .
\]
We can thus write
\[
 e^{\L t}-e^{\overline \L t}=e^{-\Lambda t}\sum_{n=1}^\infty 
\frac{t^n}{n!}\left[ (Q_S+Q_R+Q_I)^n-(Q_S+Q_R+Q_T)^n\right].
\]
We further expand each term in the above sum as
\[ 
(Q_S+Q_R+Q_I)^n-(Q_S+Q_R+Q_T)^n=\sum_{k=0}^{n-1}
(Q_S+Q_R+Q_I)^{n-1-k}(Q_I-Q_T)(Q_S+Q_R+Q_T)^k 
\]
so that we get
\begin{equation}\label{expaL2}
 e^{\L t}-e^{\overline \L t}=e^{-\Lambda t}\sum_{n=1}^\infty 
\frac{t^n}{n!}\sum_{k=0}^{n-1}
(Q_S+Q_R+Q_I)^{n-1-k}(Q_I-Q_T)(Q_S+Q_R+Q_T)^k\ .
\end{equation}
The above expansion has three major advantages:

\begin{enumerate}
 \item Isolating the factor $e^{-\Lambda t}$ avoids expanding a negative 
exponential as a power series.

 \item As discussed in the previous section, we expect the difference between 
$Q_I$ and $Q_T$ to be small when they act on a function that depends only on 
$\V$. It is easy to see that $h_k(\V):=(Q_S+Q_R+Q_T)^k h_0(\V)$ still depends 
only on $\V$ so that we expect to gain from the term $(Q_I-Q_T)h_k$.

\item Finally $\Lambda$ is the largest eigenvalue of $Q_S+Q_R+Q_T$ 
corresponding to the eigenvector 1. But $(Q_I-Q_T)1=0$ so that, writing 
$h_k=1+u_k$, we expect that $\Vert u_k\Vert_2<\Lambda^k$. A uniform version of 
this estimate, see \eqref{beforeL2} below, allows us perform the sum over 
$k$ in \eqref{expaL2} without paying a factor of $n$. This is crucial in obtaining a 
bound uniform in $t$. 
\end{enumerate}
The following proofs consist, to a large extent, in a quantitative 
implementation of the above three observations.

\subsection{Proof of Theorem \ref{thL2}}\label{pL2ev}

Observe that  $(e^{\L t}-e^{\overline \L t}) 1\equiv 0$ because the 
constant function $1$ is a steady state for both evolutions. For this reason, 
we will write
\[
 h_0(\V)=1+u_0(\V)\qquad\hbox{with}\qquad \langle u_0,1\rangle_M=0
\]
where $\langle\cdot,\cdot\rangle_M$ is the scalar product in 
$L^2(\mathbb{R}^M,\G_M(\V))$, that is
\[
 \langle u,h \rangle_M=\int u(\V)h(\V)\G_M(\V)d\V\ .
\]
From now on we will identify $L^2(\mathbb{R}^M,\G_M(\V))$ with a subspace of 
$L^2(\mathbb{R}^{M+N},\G_{M+N}(\V,\W))$. We thus need to estimate the norm of
\[
 (Q_S+Q_R+Q_I)^{n-k-1}(Q_I-Q_T)(Q_S+Q_R+Q_T)^k u_0(\V) \ .
\]
To this end, observe that $R^S_{i,j}$ is the orthogonal projector on the 
subspace of functions that are invariant under rotations of $v_i$ and $v_j$ so 
that
\[
\|R^\alpha_{i,j}\|_2 = 1 \qquad\hbox{ for }\qquad \alpha=S,\ R\ \hbox{or}\ I, 
\]
while 
\[
\|Q_T u\|_2\leq \mu\left( M-\frac12\right)\|u\|_2\qquad\hbox{ if }\qquad 
\langle 
u,1\rangle=0.
\]
Observe indeed that $Q_T$ is a sum of operators 
acting independently on each variable $v_i$. Thus, its eigenvectors are tensor 
products 
of the eigenvectors of each of the $T_i$, while its eigenvalues are sums of 
their eigenvalues. It is possible to see that the Hermite polynomial 
$H_{2n}(v_i)$ of degree $2n$ and weight $e^{-\pi v_i^2}$ is an eigenvector of 
$T_i$ with eigenvalue $a(n)$. The last inequality then follows from the fact 
that $a(0)=1$ is the 
largest eigenvalue of $T_i$ with eigenvector $H_0(v_i)=1(v_i)$, while 
$a(n)\leq 1/2$ for $n>0$. It follows that $\|T_i l\|_2\leq 
(1/2)\|l\|$ when $\langle l,1\rangle=0$. With this, we get that
\[
 \langle(Q_S+Q_R+Q_T)u,1\rangle=0 \qquad\hbox{if}\qquad \langle u,1\rangle=0
\]
and
\begin{equation}\label{beforeL2}
\|u_k\|_2\leq \left(\Lambda-\frac{\mu}{2}\right)^k\|u_0\|_2 \ ,
\end{equation}
where
\[
u_k:= (Q_S+Q_R+Q_T)^k u_0 \ ,
\]
while
\begin{equation}\label{afterL2}
 \|Q_S+Q_R+Q_I\|_2\leq \Lambda\ .
\end{equation}
We thus have to estimate $\|(Q_I-Q_T)u\|_2$ where $u$ depends only on $\V$.

\begin{lem}\label{fl} Let $u(\V) $ be any function in $L^2(\R^M, 
\Gamma_M(\V))$. 
Then
\[
\left\Vert \frac{1}{N} \sum_{j=1}^N R^I_{i,j} u - T_i u \right\Vert^2_2  = 
\frac{1}{N}\left(\langle T_i u, u\rangle - \langle T_i u,T_i u\rangle\right)
\]
\end{lem}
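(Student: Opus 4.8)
\emph{Proof sketch.} The plan is to exploit two structural facts. First, in the ground-state space $L^2(\R^{M+N},\G_{M+N})$ the operator $R^I_{i,j}$ is the orthogonal projection onto functions invariant under rotations in the $(v_i,w_j)$-plane — exactly as noted in the text for $R^S_{i,j}$, since the Gaussian weight is rotationally invariant; in particular $R^I_{i,j}$ is self-adjoint and idempotent. Second, since $\beta=2\pi$ gives $\G_1(w)=e^{-\pi w^2}$, comparing \eqref{Ti} with the definition of $R^I_{i,j}$ shows that, for a function $u$ depending only on $\V$ (so that $R^I_{i,j}u$ depends only on $\V$ and $w_j$),
\[
T_i u(\V)=\int \G_1(w)\,(R^I_{i,j}u)(\V,w)\,dw\qquad\text{for every }j,
\]
i.e. $T_iu$ is obtained from $R^I_{i,j}u$ by averaging out the single variable $w_j$. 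Write $P_j:=R^I_{i,j}$ and $X_j:=P_ju-T_iu$; then $X_j$ depends only on $(\V,w_j)$ and $\int\G_1(w_j)X_j(\V,w_j)\,dw_j=0$.

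Next I would show that the cross terms vanish. For $j\ne j'$, in $\langle X_j,X_{j'}\rangle$ one integrates $\G_{M+N}$ against a function of $(\V,w_j,w_{j'})$; integrating out all reservoir variables except $w_j,w_{j'}$ contributes a factor $1$, and the remaining integral factors as $\int\G_M(\V)\Big(\int\G_1(w_j)X_j\,dw_j\Big)\Big(\int\G_1(w_{j'})X_{j'}\,dw_{j'}\Big)\,d\V=0$ by the vanishing partial average. Hence $\|\tfrac1N\sum_jX_j\|_2^2=\tfrac1{N^2}\sum_j\|X_j\|_2^2=\tfrac1N\|X_1\|_2^2$, the last step using the permutation symmetry of the $w$-variables.

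It then remains to evaluate $\|X_1\|_2^2=\|P_1u\|_2^2-2\langle P_1u,T_iu\rangle+\|T_iu\|_2^2$. Since $P_1$ is a self-adjoint projection, $\|P_1u\|_2^2=\langle u,P_1u\rangle$, and because $u$ is independent of $w_1$ this equals $\int u(\V)\Big(\int\G_1(w_1)(P_1u)(\V,w_1)\,dw_1\Big)\G_M(\V)\,d\V=\langle u,T_iu\rangle=\langle T_iu,u\rangle$. The same averaging identity, together with the fact that $T_iu$ does not depend on $w_1$, gives $\langle P_1u,T_iu\rangle=\langle T_iu,T_iu\rangle=\|T_iu\|_2^2$. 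Substituting, $\|X_1\|_2^2=\langle T_iu,u\rangle-\langle T_iu,T_iu\rangle$, which is the claimed identity.

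The computation has no serious obstacle; the only point requiring care is the orthogonality of the cross terms, which rests on the product structure of $\G_N$ together with the vanishing of the one-variable partial average of $X_j$. Everything else is bookkeeping with the $\beta=2\pi$ normalization and the projection property of $R^I_{i,j}$.
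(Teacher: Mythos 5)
Your proof is correct and rests on exactly the same two facts as the paper's: that $R^I_{i,j}$ is an orthogonal projection in $L^2(\R^{M+N},\G_{M+N})$, and that integrating $R^I_{i,j}u$ against $\G_1(w_j)\,dw_j$ recovers $T_iu$. Packaging these as ``the $X_j$ are centered in $w_j$, hence mutually orthogonal'' is just a cleaner bookkeeping of the paper's term-by-term expansion of the square (where the $j\neq k$ cross terms each evaluate to $\Vert T_iu\Vert_2^2$ and cancel), so the argument is essentially identical.
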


\begin{proof} 
Consider for simplicity $i=1$. We get
\begin{align*}
\left\Vert\frac{1}{N} \sum_{j=1}^N R^I_{1,j} u - T_1 u \right\Vert_2^2
=&\frac{1}{N^2}  \sum_{j,k = 1}^N \int_{\R^{M+N}} R^I_{1,j} u R^I_{1,k} u 
d\mu(\V,\W) - 
\frac{2}{N} \sum_{j=1}^N \int_{\R^{M+N}} R^I_{1,j} u T_1u  d\mu(\V,\W)+\crcr
+&\int_{\R^{M+N}} |T_1u(v)|^2 d \mu(\V,\W)  \ ,
\end{align*}
where $d \mu(\V,\W)=\G_{M+N}(\V,\W)d\V d\W$. Calling 
$\V^1=(v_2,\ldots,v_M)$, we note that
\begin{align}\label{V2}
\int_{\R^{M+N}} R^I_{1,1} u T_1u d \mu(\V,\W) =&\int_{\R^{M-1}}\int_{\R^2} 
\dashint u(\sin \theta v_1 + \cos \theta w_1,\V^1) d \theta 
T_1u(\V) \G_1(v_1) \G_1(w_1) d v_1 d w_1 \G_{M-1}(\V^1)d\V^1=\crcr
=&\int_{\R^M} |T_1u(\V)|^2 \G_M(\V) d\V \ .
\end{align}
Moreover, 
\begin{align*}
\int_{\R^{M+N}} R^I_{1,1} u R^I_{1,2} u d \mu(\V,\W) 
=&\int_{\R^{M-1}}\int_{\R^3} \dashint u (\sin \theta v_1 + \cos \theta 
w_1,\V^1)d \theta\dashint u (\sin \theta v_1 + \cos \theta w_2,\V^1)d 
\theta \cdot\crcr
&\qquad\qquad\cdot\G_1(v_1) \G_1(w_1) \G_1(w_2) dv_1 dw_1 
dw_2\G_{M-1}(\V^1)d\V^1=\crcr
=&\int_\R |T_1(u)(\V)|^2 d \mu(\V,\W)\ .
\end{align*}
Finally, we observe that $R^I_{i,j}$ is a projector, so that
\[
 \int_{\R^{M+N}} R^I_{1,1} u R^I_{1,1} u d\mu(\V,\W)=\int_{\R^{M+N}}  u 
R^I_{1,1} u d\mu(\V,\W)=\int_{\R^{M+N}}  u 
T_{1} u d\mu(\V,\W)
\]
where the last equality follows as in \eqref{V2}.
Collecting all terms proves the lemma.
\end{proof}

It thus follows that
\begin{align}
 \|(Q_I-Q_T)u_k\|_2^2= \mu\left\Vert\sum_{i=1}^M\left(\frac{1}{N}\sum_{j=1}^N 
R^I_{i,j}-T_i\right)u_k\right\Vert_2^2\leq &\mu 
M\sum_{i=1}^M\left\Vert\left(\frac{1}{N}\sum_{j=1}^N  R^I_{i,j}-T_i\right) u_k 
 \right\Vert_2^2\leq\crcr
 \leq &\frac{\mu M}{N} \sum_{i=1}^M (u_k, T_i 
u_k)-(T_iu_k,T_iu_k) \ .
\end{align}
Observe that if $\langle u, 1\rangle=0$, we can write $u=\bar u+\tilde u$ where 
$\bar u$ does not depend on $v_1$ while
\[
 \int \tilde u(\V)\Gamma_1(v_1)dv_1=0\qquad\qquad \forall \V^1 \ .
\]
It follows that
\[
 \langle T_1 u, u\rangle - \langle T_1 u,T_1 u\rangle=\langle T_1 \tilde u, 
\tilde u\rangle - \langle T_1 \tilde u,T_1 \tilde u\rangle\leq
\sup_k(\rho_k-\rho_k^2)\|\tilde u\|_2
\]
where $\rho_k$ are the eigenvalues of $T_i$ different from 1. Since 
$\rho_k\leq 1/2$ (see \cite{BLV}) and $x^2-x$ is 
increasing on $[0,1/2]$, we get
\begin{equation}\label{middleL2}
 \|(Q_I-Q_T)u_k\|_2\leq\frac{\mu}{2}\frac{M}{\sqrt{N}}\|u_k\|_2.
\end{equation}
Combining \eqref{middleL2},\eqref{beforeL2} and \eqref{afterL2}, we get
\[
 \|(Q_S+Q_R+Q_I)^{n-k-1}(Q_I-Q_T)(Q_S+Q_R+Q_T)^k h_0(\V)\|_2\leq 
\frac{\mu}{2}\frac{M}{\sqrt{N}}\Lambda^{n-k-1}\left(\Lambda-\frac{\mu}{2
} \right)^k\|h_0-1\|_2 \ .
\]
Adding up, we obtain
\begin{align*}
 \left\Vert (Q_S+Q_R+Q_I)^n h_0-(Q_S+Q_R+Q_T)^n h_0\right\Vert_2\leq& 
 \frac{\mu}{2}\frac{M}{\sqrt{N}}
\Lambda^{n-1}\|h_0-1\|_2\sum_{k=0}^{n-1}\left(1-\frac{\mu 
}{2\Lambda}\right)^k=\crcr
=&\frac{M}{\sqrt{N}}
\Lambda^n\left[1-\left(1-\frac{\mu 
}{2\Lambda}\right)^n\right]\|h_0-1\|_2
\end{align*}
Thus, finally,
\begin{equation}\label{fin}
\|(e^{\L t}-e^{\overline \L t}) h_0\|_2\leq \|h_0-1\|_2
\frac{M}{\sqrt{N}}e^{-\Lambda t}\sum_{n=0}^\infty 
\frac{t^n}{n!}\Lambda^n\left[1-\left(1-\frac{\mu 
}{2\Lambda}\right)^n\right]=\|h_0-1\|_2
\frac{M}{\sqrt{N}}\left(1-e^{-\frac{\mu}{2}t}\right) \ .
\end{equation}
This concludes the proof of Theorem \ref{thL2}.

\subsection{Proof of Theorem \ref{thd2}}\label{pd2ev}

We can proceed as in eq.\eqref{expaL2} to obtain

\begin{equation}\label{expad2}
 e^{\L t}-e^{\widetilde\L t}=e^{-\Lambda t}\sum_{n=1}^\infty 
\frac{t^n}{n!}\sum_{k=0}^{n-1}
(Q_S+Q_R+Q_I)^{n-1-k}(Q_I-Q_B)(Q_S+Q_R+Q_B)^k\ .
\end{equation}
where we set as before
\[
  \widetilde \L=Q_S+Q_R+Q_B-\Lambda I
\]
with
\[
 Q_B=\mu\sum_{i=1}^M B_i \ .
\]
Using this expansion in the definition \eqref{dd2} we get
\begin{equation}\label{expa}
 d_2\left(e^{\L t}f_0,e^{\widetilde \L t}f_0\right)\leq e^{-\Lambda 
t}\sum_{n=1}^\infty 
\frac{t^n}{n!}\sum_{k=0}^{n-1}\Lambda^{k}d_2\left(
(Q_S+Q_R+Q_I)^{n-1-k}Q_I[l_k\Gamma_N],(Q_S+Q_R+Q_I)^{n-1-k}Q_B[l_k\Gamma_N]
\right) 
\end{equation}
where
\begin{equation}\label{lk}
 l_k\Gamma_N=\Lambda^{-k}(Q_S+Q_R+Q_B)^k[l_0\Gamma_N]\qquad\hbox{that is}\qquad 
l_k=\Lambda^{-k}\left(Q_S+Q_B+\frac{\lambda_R N}{2}I\right)^k[l_0]
\end{equation}
because $Q_R$ acts as a multiple of the identity on $\Gamma_N$ and $Q_B$ as well as $Q_S$ act only on $l_0$. We have 
introduced the factor $\Lambda^{-k}$ to maintain the normalization of $l_k$, 
that is $\int l_k(\V)d\V=1$.

We thus need estimates for $d_2$ that can play an analogous role as eq. 
\eqref{beforeL2}, \eqref{afterL2} and \eqref{middleL2} played in the proof of 
Theorem \ref{thL2} in section \ref{pL2ev}. 

As a first thing, we need representations of the Fourier transform of the 
collision and thermostat operators. Let $f(\V,\W)$ be a function of $(\V,\W)$. 
Since the Fourier transform commutes with rotations, we get
\[
 \widehat{R^S_{i,j}[f]}(\vxi,\veta)=\dashint d\theta \hat 
f(\xi_{i,j}(\theta),\veta):=\widehat{R^S_{i,j}}[
\hat f] (\vxi,\veta)
\]
where $\xi_{i,j}(\theta)$ is defined as in \eqref{rot}. An analogous formula holds 
for $R^I_{i,j}$ and $R^R_{i,j}$.
Moreover, we get
\[
\widehat{B_i[f]}(\vxi, \veta)=\dashint d\theta \hat 
f(\xi_i(\theta,0),\veta):=\widehat{B_i}[
\hat f ] (\vxi,\veta).
\]
The behavior of these two operators under the $d_2$ metric is contained in the 
following Lemma.

\begin{lem}\label{RB}
Let $f(\V,\W)$ and $g(\V,\W)$ be two distributions, with $0$ first moment 
and finite second moment. We have 
\begin{equation}\label{afterd2}
 d_2\left(\Lambda^{-1}(Q_S+Q_R+Q_I)f,\Lambda^{-1}(Q_S+Q_R+Q_I)g\right)\leq 
d_2\left(f,g\right)\ .
\end{equation}
Assume moreover that $f(\V,\W)=l(\V)\G_{N}(\W)$ then
\begin{equation}\label{befored2}
 d_2\left(\Lambda^{-1}(Q_S+Q_R+Q_B)f,\Gamma_{M+N}\right)\leq 
\left(1-\frac{\mu}{2\Lambda}\right)d_2\left(f,\Gamma_{M+N}\right)=\left(1-\frac{
\mu}{2\Lambda}\right)d_2\left(l,\Gamma_{M}\right)
\end{equation}
\end{lem}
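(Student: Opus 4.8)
\emph{Plan of proof.} The plan is to work entirely on the Fourier side, using the representations of $\widehat{R^S_{i,j}}$, $\widehat{R^R_{i,j}}$, $\widehat{R^I_{i,j}}$ and $\widehat{B_i}$ recorded just above. The single elementary input will be that, under the moment conditions \eqref{condd2}, the definition \eqref{dd2} of $d_2$ is equivalent to the pointwise estimate
\begin{equation}\label{pp:pw}
 |\widehat f(\vxi,\veta)-\widehat g(\vxi,\veta)|\le d_2(f,g)\,\bigl(|\vxi|^2+|\veta|^2\bigr)\qquad\text{for all }(\vxi,\veta),
\end{equation}
the values on the coordinate subspaces $\vxi=0$ or $\veta=0$ being obtained by continuity of the Fourier transforms. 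Two structural facts drive the argument: (i) each $R^\alpha_{i,j}$, $\alpha=S,R,I$, is an average over planar rotations of the pair of Fourier variables it touches, and such a rotation preserves $|\vxi|^2+|\veta|^2$; (ii) a direct computation (matching the formula for $\widehat{B_i}$ above) gives $\widehat{B_i f}(\vxi,\veta)=\dashint e^{-\pi\xi_i^2\sin^2\theta}\,\widehat f(\xi_1,\dots,\xi_i\cos\theta,\dots,\xi_M,\veta)\,d\theta$, so $\Gamma_{M+N}$ is a fixed point of $B_i$ (its Fourier transform being itself) and the $\xi_i$ Fourier weight is strictly contracted.

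For \eqref{afterd2}, (i) together with \eqref{pp:pw} applied under the $\theta$-average gives $|\widehat{R^\alpha_{i,j}f}(\vxi,\veta)-\widehat{R^\alpha_{i,j}g}(\vxi,\veta)|\le d_2(f,g)(|\vxi|^2+|\veta|^2)$ for $\alpha=S,R,I$. Since $Q_S+Q_R+Q_I$ is a nonnegative linear combination of the $R^\alpha_{i,j}$ whose coefficients sum to $\tfrac{\lambda_S}{M-1}\binom{M}{2}+\tfrac{\lambda_R}{N-1}\binom{N}{2}+\tfrac{\mu}{N}MN=\Lambda$, the triangle inequality yields $|\widehat{(Q_S+Q_R+Q_I)f}-\widehat{(Q_S+Q_R+Q_I)g}|\le\Lambda\,d_2(f,g)(|\vxi|^2+|\veta|^2)$; dividing by $\Lambda(|\vxi|^2+|\veta|^2)$ and taking the supremum gives \eqref{afterd2}.

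For \eqref{befored2} the product form $f=l(\V)\Gamma_N(\W)$ is essential: since $\widehat{\Gamma_1}=\Gamma_1$ one has $\widehat f(\vxi,\veta)-\Gamma_{M+N}(\vxi,\veta)=\bigl(\widehat l(\vxi)-\Gamma_M(\vxi)\bigr)\Gamma_N(\veta)$, so the difference is carried entirely by the system variables; in particular $|\widehat f-\Gamma_{M+N}|\le d_2(l,\Gamma_M)\,|\vxi|^2\,\Gamma_N(\veta)$, and the same identity gives $d_2(f,\Gamma_{M+N})=d_2(l,\Gamma_M)$. I would then track each building block of $Q_S+Q_R+Q_B$, using that $\Gamma_{M+N}$ is fixed by all of them. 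A reservoir collision $R^R_{i,j}$ leaves $f$ and $\Gamma_{M+N}$ unchanged, hence contributes $\le d_2(l,\Gamma_M)|\vxi|^2\Gamma_N(\veta)$; a system collision $R^S_{i,j}$ rotates two $\xi$'s, preserving $|\vxi|^2$ and leaving $\Gamma_N(\veta)$ untouched, hence contributes the same bound. For the thermostat, (ii) gives $\widehat{B_i f}(\vxi,\veta)-\Gamma_{M+N}(\vxi,\veta)=\Gamma_N(\veta)\dashint e^{-\pi\xi_i^2\sin^2\theta}\bigl(\widehat l-\Gamma_M\bigr)(\xi_1,\dots,\xi_i\cos\theta,\dots,\xi_M)\,d\theta$; applying \eqref{pp:pw} to the integrand, writing $|\vxi|^2-\xi_i^2\sin^2\theta=(|\vxi|^2-\xi_i^2)+\xi_i^2\cos^2\theta$, and using $\dashint e^{-\pi\xi_i^2\sin^2\theta}\,d\theta\le1$ together with $\dashint e^{-\pi\xi_i^2\sin^2\theta}\cos^2\theta\,d\theta\le\dashint\cos^2\theta\,d\theta=\tfrac12$, one gets $|\widehat{B_i f}-\Gamma_{M+N}|\le d_2(l,\Gamma_M)\bigl(|\vxi|^2-\tfrac12\xi_i^2\bigr)\Gamma_N(\veta)$. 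Summing with the weights of $Q_S+Q_R+Q_B$: the collision terms give $\bigl(\tfrac{\lambda_S M}{2}+\tfrac{\lambda_R N}{2}\bigr)|\vxi|^2$ and the $M$ thermostat terms give $\mu\sum_i\bigl(|\vxi|^2-\tfrac12\xi_i^2\bigr)=\mu M|\vxi|^2-\tfrac{\mu}{2}|\vxi|^2$, a total of $\bigl(\Lambda-\tfrac{\mu}{2}\bigr)|\vxi|^2$. Dividing by $\Lambda$ and using $\Gamma_N(\veta)|\vxi|^2\le|\vxi|^2+|\veta|^2$ gives $d_2\bigl(\Lambda^{-1}(Q_S+Q_R+Q_B)f,\Gamma_{M+N}\bigr)\le\bigl(1-\tfrac{\mu}{2\Lambda}\bigr)d_2(l,\Gamma_M)$, which is \eqref{befored2}.

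The computations above are short; the main obstacle is conceptual, namely that \eqref{befored2} genuinely requires the product structure $f=l\Gamma_N$. For a generic $f$ the difference $\widehat f-\Gamma_{M+N}$ may be concentrated in the reservoir directions $\veta$, on which $B_i$ acts as the identity, and then one recovers only the plain contraction $d_2(\,\cdot\,)\le d_2(f,\Gamma_{M+N})$ with no gain --- which is precisely why in \eqref{expad2} the operator $Q_S+Q_R+Q_B$ is always applied to states of the form $l_k\Gamma_N$. The remaining care is the routine passage, via continuity, from \eqref{dd2} to \eqref{pp:pw} on the coordinate subspaces, so that the rotated and $\cos\theta$-rescaled Fourier arguments may legitimately be inserted.
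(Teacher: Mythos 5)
Your proof is correct and follows essentially the same route as the paper's: rotation invariance of $|\vxi|^2+|\veta|^2$ for the collision operators plus convexity gives \eqref{afterd2}, and the contraction $|\vxi_i(\theta,0)|^2=|\vxi|^2-\xi_i^2\sin^2\theta$ averaged over $\theta$ yields the factor $1-\mu/(2\Lambda)$ in \eqref{befored2}. (Your Fourier representation of $B_i$, with the factor $e^{-\pi\xi_i^2\sin^2\theta}$, matches the one actually used in the paper's display \eqref{BB}.)
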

\begin{proof}
It is easy to see that $d_2(f,g)$ is jointly convex in $f$ and $g$, that is for 
every $\alpha,\beta>0$ with $\alpha+\beta=1$, we have
\begin{equation}\label{conv}
 d_2(\alpha f_1+\beta f_2,\alpha g_1+\beta g_2)\leq \alpha d_2(f_1,g_1)+\beta 
d_2(f_2,g_2).
\end{equation}
We have
\[  
\widehat{R^S_{i,j}}[f](\vxi,\veta)-\widehat{R^S_{i,j}}[g]
(\vxi,\veta)=\dashint d\theta 
\left(\hat 
f(\vxi_{i,j}(\theta),\veta)-\hat g(\vxi_{i,j}(\theta),\veta)\right)
\]
and, because $|\vxi_{i,j}(\theta)|=|\vxi|$, we get
\begin{align}\label{RR} 
d_2\left(R^S_{i,j}f,R^S_{i,j}g\right)\leq&\sup_{\vxi,\veta\not=0}\frac{\dashint 
d\theta 
\left|\hat 
f(\vxi_{i,j}(\theta),\veta)-\hat 
g(\vxi_{i,j}(\theta),\veta)\right|}{|\vxi_{i,j}(\theta)|^2+|\veta|^2}
\leq\crcr
\leq&\sup_{\vxi,\veta\not=0,\theta}\frac{\left| 
\hat f(\vxi_{i,j}(\theta),\veta)-\hat 
g(\vxi_{i,j}(\theta),\veta)\right|}{|\vxi_{i,j}
(\theta)|^2+|\veta|^2 }=d_2\left(f,g\right)
\end{align}
Clearly, an identical argument holds for $R^{I}_{i,j}$ and $R^{R}_{i,j}$. 
Equation \eqref{afterd2} follows from the convexity property \eqref{conv}. 

Because 
$B_i\G_{M}=\G_{M}$ we get
\begin{align}\label{BB}
d_2\left(\frac{1}{M}\sum_{i=1}^M B_il_0, 
\G_M\right)\leq&\frac{1}{M}\sup_{\vxi\not=0}\sum_{i=1}^M\dashint \frac{ 
\left|\hat 
l(\vxi_{i}(\theta,0))-\G_M(\vxi_{i}(\theta,
0))\right|\Gamma_1(\zeta_i\sin\theta)} { 
|\vxi_i(\theta,0)|^2}\frac{\left|\vxi_i(\theta,0)\right|^2}{|\vxi|^2}
d\theta\leq \crcr
\leq &d_2\left(l,\G_M\right)\frac{1}{M}\dashint d\theta 
\sum_{i=1}^M\frac{|\vxi|^2-\xi_i^2\sin^2\theta}{|\vxi^2|}=\left(1-\frac{
\dashint d\theta \sin^2\theta}{M}\right)d_2\left(l,\G_M\right).
\end{align}
Again \eqref{befored2} follows from \eqref{conv}.

\end{proof}

Combining \eqref{expa} and \eqref{afterd2} we get
\begin{equation}\label{expad22}
 d_2\left(e^{\L t}f_0,e^{\widetilde \L t}f_0\right)\leq
 e^{-\Lambda 
t}\sum_{n=1}^\infty 
\frac{t^n\Lambda^{n-1}}{n!}\sum_{k=0}^{n-1}d_2\left(Q_I[l_k\Gamma_N], 
Q_B[l_k\Gamma_N]\right)
\end{equation}
Thus we want to estimate
\begin{equation}\label{D2D}
\frac{1}{ M}d_2(Q_I[l_k\Gamma_N],Q_B[l_k\Gamma_N]) =\frac{\mu}{MN}\sup_{\vxi,
\veta\not=0 } \frac { 1 } {
|\vxi|^2+|\veta|^2}\left|\sum_{i=1}^M\sum_{j=1}^N 
\left(\widehat{R}^I_{i,j}[\widehat 
l_k\Gamma_N](\vxi,\veta)-\widehat B_i[\widehat 
l_k\Gamma_N](\vxi,\veta)\right)\right| \ ,
\end{equation}
where $l_k$ is defined in \eqref{lk}. 
Setting
\[
\widehat F_{k,i}(\vec \xi, \eta_j)  = \dashint d\theta 
\hat l_k(\xi_1,\ldots,\xi_i\cos\theta+\eta_j\sin\theta,\ldots,
\xi^M)\Gamma_1(-\xi_i\sin\theta+\eta_j\cos\theta)
\]
we can write
\[
 \widehat{R}^I_{i,j}[\widehat 
l_k\Gamma_N] =  \Gamma_ { N-1}(\vec \eta^j) \widehat F_{k,i}(\vxi,\eta_j)
\]
where $\vec \eta^j=(\eta_1,\ldots,\eta_{j-1},\eta_{j+1},\ldots,\eta_N)$.
Likewise,
\[
 \widehat B_{i}[\widehat 
l_k\Gamma_N]=\Gamma_{N}(\eta)\dashint d\theta 
\hat l_k(\xi_1,\ldots,\xi_i\cos\theta,\ldots,
\xi^M)\Gamma_1(-\xi_i\sin\theta)=\widehat 
F_{k,i}(\vxi,0)\Gamma_{1}(\eta_j)\Gamma_{N-1}(\eta^j).
\]
Thus calling
\begin{equation}\label{bGG}
\widehat G_k(\vxi,\eta)= \frac{1}{M}\sum_{i=1}^M \left(\widehat 
F_{k,i}(\vxi,\eta)-\widehat F_{k,i}(\vxi,0)\Gamma_1(\eta)\right)
\end{equation}
we can rewrite \eqref{D2D} in a more compact form 
\begin{equation}\label{GG}
\frac1M d_2(Q_I[l_k\Gamma_N],Q_B[l_k\Gamma_N])
=\frac\mu N\sup_{\vxi,
\veta\not=0 } \frac{1}{
|\vxi|^2+|\veta|^2}\sum_{j=1}^N \widehat G_k(\vxi,\eta_j)\Gamma_{N-1}(\veta^j).
\end{equation}
Moreover, we have that
\begin{align*}
 F_{k,i}(\V,w)=&\dashint d\theta 
\hat l_k(v_1,\ldots,v_i\cos\theta+w\sin\theta,\ldots,
v^M)\Gamma_1(-v_i\sin\theta+w\cos\theta)=\crcr
=&\dashint d\theta 
\hat l_k(v_1,\ldots,v_i\cos(-\theta)-w\sin(-\theta),\ldots,
v^M)\Gamma_1(v_i\sin(-\theta)-w\cos(-\theta))=F_{k,i}(\V,-w)
\end{align*}
where we have used that $\Gamma_1$ is an even function. Thus $\widehat 
F_{k,i}(\vxi,\eta)$ is even in $\eta$ which makes $\widehat G_k(\vxi,\eta)$ 
even in $\eta$. We also have $\widehat G_k(\vxi,0)=0$.

Our goal is to bound $d_2(Q_I[l_k\Gamma_N],Q_B[l_k\Gamma_N])$ in 
terms of $d_2(l_k,\Gamma_M)$. Thus, we focus on the supremum over 
the $\veta$ variables of the reservoirs $R$, that is we look at
\begin{equation}\label{supeta}
 {\cal D}_N\left(\widehat G_k(\vxi,\cdot),|\vxi|\right)=\sup_{
\veta\not=0 } \frac{1}{
|\vxi|^2+|\veta|^2}\sum_{j=1}^N \widehat G_k(\vxi,\eta_j)\Gamma_{N-1}(\veta^j).
\end{equation}
In Proposition \ref{estim} we show that we can bound \eqref{supeta} in terms of ${\cal 
D}_1\left(\widehat G_k(\vxi,\cdot),|\vxi|\right)$ and of 
$|\partial_\eta^pG_k(\vxi,\eta)|$ for $p\leq4$, (see \eqref{C4}and \eqref{DD} 
below). Observe that ${\cal D}_1\left(\widehat G_k(\vxi,\cdot),|\vxi|\right)$ 
refers to the situation where there is only one particle in the reservoir $R$, 
and thus, the supremum is over $\eta\in\mathbb{R}$ instead of $\veta\in\mathbb{R}^N$.

Proposition \ref{Hi} then shows that $|\partial^4_\eta G_k(\vxi,\eta)|$ can be 
bounded in terms of the fourth moment $E_4$ of the initial distribution, (see 
\eqref{fourth}). We thus get a bound for $d_2(Q_I[l_k\Gamma_N],Q_B[l_k\Gamma_N])$ 
in terms of $d_2(Q_I[l_k\Gamma_1],Q_B[l_k\Gamma_1])$ and $E_4$. Together with 
\eqref{1to0} below, this will give us the desired estimate 
on $d_2(Q_I[l_k\Gamma_N],Q_B[l_k\Gamma_N])$ in terms of $d_2(l_k,\G_M)$. The 
conclusion of the proof of Theorem \ref{thd2} will then be 
very similar to the final steps of the proof of Theorem \ref{thL2}.

\begin{prop}\label{estim}
Let $H(\eta)$ be a bounded $C^4$ function of $\eta$. Assume that 
\[
 H(0)=0 \qquad\qquad H(\eta)=H(-\eta)
\]
and
\begin{equation}\label{C4}
C_4=\|H(\cdot)\|_{C^4}:=\max_{p\leq4}\sup_{\eta}\left|\frac{d^p}{d\eta^p} 
H (\eta)\right|<\infty.
\end{equation}
Calling
\begin{equation}\label{DD}
 {\cal D}_N(H,a)=\sup_{\veta\not=0}\frac{1}{
a^2+|\veta|^2}\left|\sum_{j=1}^N H(\eta_j)\Gamma_{N-1}(\veta^j)\right|
\end{equation}
we have
\begin{equation}\label{LL}
 {\cal D}_N(H,a)\leq \left[(8 C_4+{\cal D}_1(H,a)){\cal 
D}_1(H,a)\right]^{\frac12}
\end{equation}

\end{prop}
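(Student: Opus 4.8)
The plan is to estimate the supremum defining $\mathcal{D}_N(H,a)$ by splitting the sum $\sum_{j=1}^N H(\eta_j)\Gamma_{N-1}(\veta^j)$ into a ``diagonal'' contribution coming from the single large coordinate and a ``tail'' contribution coming from the remaining coordinates, and controlling each against $|\veta|^2$. Concretely, for a fixed $\veta\neq 0$, reorder so that $|\eta_1|=\max_j|\eta_j|$. The first term $H(\eta_1)\Gamma_{N-1}(\veta^1)\le H(\eta_1)$ should be compared with $a^2+\eta_1^2$, which is exactly what $\mathcal{D}_1(H,a)$ controls: $|H(\eta_1)|\le \mathcal{D}_1(H,a)(a^2+\eta_1^2)$. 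For the remaining terms $\sum_{j\ge 2}H(\eta_j)\Gamma_{N-1}(\veta^j)$ we cannot afford a factor $N$, so the idea is to use the Gaussian weights $\Gamma_{N-1}(\veta^j)$ for summability together with the fact that $H$ vanishes to second order at $0$: since $H(0)=0$, $H'(0)=0$ (because $H$ is even and $C^1$), Taylor's theorem with the $C^4$ bound gives $|H(\eta_j)|\le \tfrac12 C_4\,\eta_j^2$ for all $j$, hence also $|H(\eta_j)|\le \tfrac12 C_4\,\eta_j^2\le \tfrac12 C_4|\veta|^2$.

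Next I would carry out the two bounds carefully. For the tail, write $\sum_{j\ge 2}|H(\eta_j)|\Gamma_{N-1}(\veta^j)\le \tfrac12 C_4\sum_{j\ge 2}\eta_j^2\,\Gamma_{N-1}(\veta^j)$, and note that for $j\ge 2$ the coordinate $\eta_1$ appears inside $\Gamma_{N-1}(\veta^j)$, so after integrating out all but the needed variable one gets an extra Gaussian factor $\Gamma_1(\eta_1)\le 1$; more usefully, since $|\eta_j|\le|\eta_1|$, each term satisfies $\eta_j^2\Gamma_{N-1}(\veta^j)\le \eta_1^2\,\Gamma_{N-1}(\veta^j)$ but summing $N$ of those still costs a factor $N$. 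The correct move is instead: $\sum_{j\ge2}\eta_j^2\Gamma_{N-1}(\veta^j)\le |\veta|^2\max_{j\ge2}\Gamma_{N-1}(\veta^j)\le|\veta|^2\,\Gamma_1(\eta_1)$ since $\Gamma_{N-1}(\veta^j)$ contains the factor $\Gamma_1(\eta_1)$ and all other factors are $\le (2\pi)^{-1/2}\dots$ — here one uses that $\beta=2\pi$ makes the Gaussian peak value equal to $1$, so $\Gamma_1(x)\le 1$ and more of these factors only help. This yields $\sum_{j\ge2}|H(\eta_j)|\Gamma_{N-1}(\veta^j)\le \tfrac12 C_4|\veta|^2\Gamma_1(\eta_1)$. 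Combining with the diagonal term and bounding $\Gamma_1(\eta_1)\le 1$ in the diagonal while keeping it in the tail, one gets
\[
\Big|\sum_{j=1}^N H(\eta_j)\Gamma_{N-1}(\veta^j)\Big|
\le \mathcal{D}_1(H,a)(a^2+\eta_1^2)\,\Gamma_1(\eta_1) + \tfrac12 C_4|\veta|^2\Gamma_1(\eta_1).
\]
Dividing by $a^2+|\veta|^2\ge a^2+\eta_1^2$, the first term is $\le \mathcal{D}_1(H,a)\Gamma_1(\eta_1)$ and the second is $\le \tfrac12 C_4\Gamma_1(\eta_1)$. To upgrade the crude constant and to land on the claimed square-root form $[(8C_4+\mathcal{D}_1)\mathcal{D}_1]^{1/2}$, I would optimize over whether $\eta_1$ is ``large'' or ``small'': when $\eta_1^2$ is small relative to $\mathcal{D}_1/C_4$ the diagonal term dominates and the bound is essentially $\mathcal{D}_1$, while when $\eta_1$ is large the Gaussian factor $\Gamma_1(\eta_1)=e^{-\pi\eta_1^2}$ decays and beats the growing $\eta_1^2$ — balancing the two regimes at the threshold $\eta_1^2\sim \mathcal{D}_1/C_4$ produces a geometric-mean bound of the stated shape, with the constant $8$ absorbing the elementary estimates $\sup_x x^2 e^{-\pi x^2}$, $\sup_x e^{-\pi x^2}$ and the Taylor remainder factor $\tfrac12$.

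The main obstacle is the tail sum: naively it carries a factor $N$, and the whole point of the Proposition is to eliminate it. The key insight that kills the $N$ is that each $\Gamma_{N-1}(\veta^j)$ is a product of $N-1$ Gaussian factors, \emph{every one of which is bounded by $1$} (this is where the normalization $\beta=2\pi$ is used), so in any single term of the sum one is free to discard all but one or two of these factors; meanwhile the quadratic vanishing of $H$ at the origin lets the surviving $\eta_j^2$ be absorbed into $|\veta|^2$. Making this rigorous while tracking constants carefully enough to get $8$ rather than something larger, and handling the reordering/relabelling so that the ``large coordinate'' argument is uniform over $\veta$, is the delicate part; the rest is elementary calculus of Gaussians.
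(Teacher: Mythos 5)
There is a genuine gap, and it sits exactly where the Proposition is hard. Your diagonal/tail split does remove the naive factor of $N$, but what it actually yields is
\[
\frac{1}{a^2+|\veta|^2}\Bigl|\sum_{j} H(\eta_j)\Gamma_{N-1}(\veta^j)\Bigr|\;\le\; \mathcal D_1(H,a)\;+\;\tfrac12 C_4\,,
\]
i.e.\ an \emph{additive} $C_4$ term. (As a side remark, the factor $\Gamma_1(\eta_1)$ you attach to the diagonal term is unjustified: $\Gamma_{N-1}(\veta^1)$ is a product over the coordinates $\eta_2,\dots,\eta_N$ and does not contain $e^{-\pi\eta_1^2}$; but this is secondary.) This additive bound does not imply \eqref{LL}: for large $a$ one has $\mathcal D_1(H,a)\sim \sup|H|/a^2\to0$, so the right side of \eqref{LL} is $O(\sqrt{C_4}/a)\to 0$, while your bound stagnates at $\tfrac12 C_4$. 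Your proposed repair — distinguishing whether the largest coordinate $\eta_1$ is large or small — cannot close this, because the dangerous configurations are not one large coordinate but \emph{many moderate ones}: take $\eta_1=\dots=\eta_k=\eta_0$ with $\eta_0$ small and $k$ large. Then $\Gamma_1(\eta_1)\approx1$, $|\veta|^2=k\eta_0^2$ can exceed $a^2$, and your tail estimate is again $\approx\tfrac12 C_4$. By retaining only the single factor $\Gamma_1(\eta_1)$ out of $\Gamma_{N-1}(\veta^j)$ you discard the decay $e^{-\pi(k-1)\eta_0^2}$ in the \emph{number} of nonzero coordinates, which is precisely what must control this regime. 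This is the content of the paper's own warning that the crude bounds \eqref{eq:crude}--\eqref{eq:crude1} (which are essentially your estimate with $\tfrac12 C_4$ in place of $\mathcal D_1(H,0)$) ``are not yet enough.''

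The two ingredients your proposal is missing are the two halves of the paper's proof. First, a genuine variational argument (Lemma \ref{secondstep}): one majorizes $|H|$ by $\widetilde H(\eta)=\min\{\mathcal D_1(H,0)\eta^2,\ \mathcal D_1(H,a)(a^2+\eta^2)\}$ and shows, by differentiating the ratio in each $\eta_i$ and a two-coordinate convexity argument, that the supremum is attained at $\veta=(\eta_0(a),\dots,\eta_0(a),\eta,0,\dots,0)$; this yields $\mathcal D_N(H,a)\le\max\{\mathcal D_1(H,a),\,2\mathcal D_1(H,0)/(1+\tfrac\pi2 a^2)\}$, i.e.\ an upper bound with the crucial $1/(1+a^2)$ decay. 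Second, a \emph{reverse} inequality (Lemma \ref{firststep}): using the fourth-order Taylor expansion, $\mathcal D_1(H,a)\ge 2\mathcal D_1(H,0)^2/(C_4(3a^2+4))$. Only by inserting this lower bound into the first estimate does the geometric mean $\sqrt{C_4\,\mathcal D_1(H,a)}$ appear; your argument has no mechanism that produces a multiplicative factor of $\sqrt{\mathcal D_1(H,a)}$ in front of $\sqrt{C_4}$, and indeed never relates $\mathcal D_1(H,0)$ (or $C_4$) back to $\mathcal D_1(H,a)$ at all.
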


One may hope that ${\cal D}_N(H,a)\leq K {\cal D}_1(H,a)$ be true for some $K$ independent of $N$. 
We will show in Appendix \ref{P5} that no such $K$ exists.
Observe that $\mathcal D_N(H,a)$ is of order 1 uniformly in 
$N$ since we have

\begin{equation}\label{eq:crude}
 {\cal D}_N(H,a)\leq \sup_{\veta\not=0}\frac{\sum_{j=1}^N \left|
H(\eta_j)\right|}{\sum_{j=1}^N\eta_j^2}\leq
\sup_{\eta\not=0}\frac{|H(\eta)|}{\eta^2}={\cal D}_1(H,0).
\end{equation}
We were not able to use \eqref{eq:crude} directly. Indeed \eqref{eq:crude} and 
\eqref{GG} give 
\[
\frac1M d_2(Q_I[l_k\Gamma_N],Q_B[l_k\Gamma_N])
=\frac\mu N\sup_{\vxi,\eta\not=0 } \frac{1}{|\veta|^2}\sum_{j=1}^N \widehat 
G_k(\vxi,\eta)
\]
and it is not clear how to relate the right side of the above equation 
to $d_2(l_k,\G_M)$.

We can try to improve the above estimate observing that 
\begin{equation}\label{basic}
|H(\eta)|\leq {\cal D}_1(H,0)\eta^2
\end{equation}
so that
\[
\frac{\sum_{j=1}^N 
\left|H(\eta_j)\right|\Gamma_N(\veta^j)}{a^2+|\veta|^2}\leq 
{\cal D}_1(H,0)\Gamma_N(\veta)\frac{\sum_{j=1}^N
\eta_j^2e^{\pi \eta_j^2}}{a^2+|\veta|^2}.
\]
Since $xe^{\pi x}$ is an increasing function for $x>0$ we have that
\[
 \sum_{j=1}^N\eta_j^2e^{\pi \eta_j^2}\leq |\veta|^2 e^{\pi|\veta|^2}
\]
that is, the supremum of $\sum_{j=1}^N\eta_j^2e^{\pi \eta_j^2}$ on the set 
$|\veta|=N$ is reached when $\eta_1=N$ and $\veta^1=0$. This observation 
will be usefull in the following. Thus we get

\begin{equation}\label{eq:crude1}
\frac{\sum_{j=1}^N 
\left|H(\eta_j)\right|\Gamma_N(\veta^j)}{a^2+|\veta|^2}\leq {\cal 
D}_1(H,0)\frac{|\veta|^2}{a^2+|\veta|^2}.
\end{equation}
Alas, this is not yet enough since after taking the supremum on $\veta$ we are 
back to \eqref{eq:crude}. Observe though that, if $\bar \eta$ is such that 
$|H(\bar \eta)|=\sup_\eta |H(\eta)|$, then
\[
 \sup_{\veta\not=0}\frac{1}{
a^2+|\veta|^2}\left|\sum_{j=1}^N 
H(\eta_j)\Gamma_{N-1}(\veta^j)\right|=\sup_{\veta\not=0, 
|\eta_i|\leq\bar \eta}\frac{1}{
a^2+|\veta|^2}\left|\sum_{j=1}^N H(\eta_j)\Gamma_{N-1}(\veta^j)\right|
\]
that is, we can limit the seprumum in \eqref{DD} to the region where 
$\eta_i\leq 
\bar \eta$, for every $i$. But again we have no control on $\bar\eta$.
In the first part of the proof of Proposition \ref{estim} we will use 
an improved version of the above argument to show that ${\cal D}_N(H,a)$ can be 
bounded in terms of ${\cal D}_1(H,0)/(1+a^2)$.  

While it is obvious that ${\cal D}_1(H,a)\leq {\cal D}_1(H,0)$, the inverse 
inequality is generically far from true. In the second part of 
the proof, we find a lower bound on ${\cal D}_1(H,a)$ in 
terms of ${\cal D}_1(H,0)$ under the hypothesis that the fourth derivative of 
$H(\eta)$ is bounded. Observe indeed that, if $H(\eta)$ is of the form 
$H(\eta)=\frac{H''(0)}{2}\eta^2 - C \eta^4$ for 
some $C$, at least near $\eta=0$, then ${\cal D}_1(H, a) \geq \frac{ 
H''(0)^2}{ 2 a^2 C + H''(0) }$.  In Lemma \ref{firststep} we will show 
that a similar estimate holds for a generic $H$ once we replace $H''(0)$ by 
${\cal D}_1(H,0)$.

From these, Proposition \ref{estim} will easily follow.

\begin{proof}[Proof of Proposition \ref{estim}]

From \eqref{DD} it follows that
\[
 |H(\eta)|\leq {\cal D}_1(H,a)(\eta^2+a^2).
\]
Defining
\[
 \widetilde{H}(\eta,a)=\min\{{ \cal D}_1(H,0)\eta^2, {\cal D}_1(H,a) 
                       (a^2+\eta^2)\}  
                       =\begin{cases}
                       {\cal D}_1(H,0)\eta^2& {\rm if}\  \eta^2\leq\eta^2_0(a)\crcr
                       \mathcal D_1(H,a)(a^2+\eta^2) &{\rm if}\ \eta^2\geq\eta^2_0(a)
                      \end{cases} \ ,
\]
where 
\begin{equation}\label{eta0}
\eta^2_0(a)=\frac{\mathcal D_1(H,a)a^2}{{\cal D}_1(H,0)-\mathcal D_1(H,a)}
\end{equation}

\noindent is chosen to make $\widetilde{H}$ continuous.
We get $H(\eta)\leq \widetilde{H}(\eta,a)$ and thus $\mathcal D_N(H,a)\leq 
\mathcal D_N(\widetilde H,a)$. The following Lemma contains our main 
improvement of \eqref{eq:crude} and \eqref{eq:crude1}.

\begin{lem}\label{secondstep} Under the hypotheses of Proposition \ref{estim} we have
 \begin{equation}\label{eq:keta}
 {\cal D}_N(\widetilde H, a)={\cal D}_1(H,0)\sup_{k\leq N,|\eta|\leq\eta_0(a)} 
\frac{k\eta_0(a)^2e^{-\pi((k-1)\eta_0(a)^2+\eta^2)}+\eta^2e^{
-\pi k\eta_0(a)^2}}{a^2+k\eta_0(a)^2+\eta^2}
\end{equation}
that is, the supremum in \eqref{DD} for $\widetilde H$ is attained for 
$\veta$ of the form $\veta=(\eta_0(a),\ldots,\eta_0(a),\eta,0,\ldots,0)$ for some $\eta$ with 
$|\eta|\leq\eta_0(a)$. 

\end{lem}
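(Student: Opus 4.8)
The plan is to recast \eqref{eq:keta} as a finite–dimensional maximization and then locate the maximizing configuration in two steps. Since $\widetilde H(\cdot,a)\ge 0$ and $\widetilde H(-\eta,a)=\widetilde H(\eta,a)$, and since $\Gamma_{N-1}(\veta^j)=e^{-\pi|\veta|^2}e^{\pi\eta_j^2}$, the substitution $s_j=\eta_j^2$, $S=\sum_j s_j$, turns \eqref{DD} into
\[
 \mathcal D_N(\widetilde H,a)=\sup_{\vec s\in[0,\infty)^N\setminus\{0\}}F(\vec s),\qquad F(\vec s):=\frac{e^{-\pi S}\sum_{j=1}^N\phi(s_j)}{a^2+S},\qquad \phi(s):=\widetilde H(\sqrt s,a)\,e^{\pi s}.
\]
Writing $s_0:=\eta_0(a)^2$, one has $\phi(s)=\mathcal D_1(H,0)\,s\,e^{\pi s}$ on $[0,s_0]$ and $\phi(s)=\mathcal D_1(H,a)(a^2+s)e^{\pi s}$ on $[s_0,\infty)$, the two expressions agreeing at $s_0$ by the definition \eqref{eta0} of $\eta_0(a)$; that same definition yields $\mathcal D_1(H,0)s_0=\mathcal D_1(H,a)(a^2+s_0)$, equivalently $\mathcal D_1(H,0)s_0/(a^2+s_0)=\mathcal D_1(H,a)$, which will be used below. (We may assume $\mathcal D_1(H,a)<\mathcal D_1(H,0)$, so $0<s_0<\infty$; the degenerate case $\widetilde H(\eta,a)=\mathcal D_1(H,0)\eta^2$ is handled by a direct computation.) Evaluating $F$ at $\vec s=(s_0,\dots,s_0,\eta^2,0,\dots,0)$ with $k\le N$ copies of $s_0$ and $|\eta|\le\eta_0(a)$ reproduces exactly the right-hand side of \eqref{eq:keta}; thus ``$\ge$'' is immediate and only ``$\le$'' needs proof.

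Step 1: the supremum is unchanged if we restrict to $\vec s\in[0,s_0]^N$. Fix $s_2,\dots,s_N$, set $\sigma=\sum_{j\ge2}s_j$ and $P=\sum_{j\ge2}\phi(s_j)\ge0$, and view $F$ as a function of $s_1$. For $s_1\ge s_0$ a direct computation gives $F=e^{-\pi\sigma}\Theta(s_1)$ with
\[
 \Theta(s)=\frac{\mathcal D_1(H,a)(a^2+s)+Pe^{-\pi s}}{a^2+\sigma+s}.
\]
The numerator of $\Theta'(s)$ equals $h(s):=\mathcal D_1(H,a)\sigma-Pe^{-\pi s}\bigl(\pi(a^2+\sigma+s)+1\bigr)$, and $h'(s)=\pi^2Pe^{-\pi s}(a^2+\sigma+s)\ge0$; hence $h$ changes sign at most once, and from $-$ to $+$, so $\Theta$ is first non-increasing then non-decreasing on $[s_0,\infty)$. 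Since $\Theta(s)\to\mathcal D_1(H,a)$ as $s\to\infty$, this gives $\sup_{s\ge s_0}\Theta(s)=\max\{\Theta(s_0),\mathcal D_1(H,a)\}$. Now $e^{-\pi\sigma}\Theta(s_0)=F(s_0,s_2,\dots,s_N)$, while $e^{-\pi\sigma}\mathcal D_1(H,a)\le\mathcal D_1(H,a)=F(\sqrt{s_0},0,\dots,0)$ by the identity recorded above. Therefore, whenever some coordinate of $\vec s$ exceeds $s_0$ (say $s_1$),
\[
 F(\vec s)\le\max\bigl\{\,F(s_0,s_2,\dots,s_N),\ \sup_{[0,s_0]^N}F\,\bigr\}.
\]
The configuration $(s_0,s_2,\dots,s_N)$ has strictly fewer coordinates above $s_0$ than $\vec s$, so iterating this bound at most $N$ times yields $\sup_{[0,\infty)^N}F=\sup_{[0,s_0]^N}F$.

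Step 2: on $[0,s_0]^N$ at most one coordinate of a maximizer is free. For fixed $S$ the prefactor $e^{-\pi S}/(a^2+S)$ is constant, so one must maximize $\sum_j\phi(s_j)$ over the polytope $\{\vec s\in[0,s_0]^N:\sum_j s_j=S\}$. On $[0,s_0]$, $\phi(s)=\mathcal D_1(H,0)se^{\pi s}$ with $\phi''(s)=\mathcal D_1(H,0)\pi(2+\pi s)e^{\pi s}>0$, so $\sum_j\phi(s_j)$ is strictly convex and attains its maximum over this polytope at an extreme point; an extreme point of a box intersected with a single hyperplane has at most one coordinate strictly inside $(0,s_0)$. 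Hence every $\vec s\in[0,s_0]^N$ is dominated by a point $(s_0,\dots,s_0,\eta^2,0,\dots,0)$ with $k\le N$ copies of $s_0$ and $|\eta|\le\eta_0(a)$, and evaluating $F$ there produces precisely the right-hand side of \eqref{eq:keta}. Combined with Step 1 this gives ``$\le$'' and hence the Lemma.

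The delicate point is Step 1. The single-variable section $s\mapsto F$ is convex on $[0,s_0]$ and on $[s_0,\infty)$ but has a downward kink at $s_0$ (the slope of $s\mapsto\widetilde H(\sqrt s,a)$ drops from $\mathcal D_1(H,0)$ to $\mathcal D_1(H,a)$), so plain convexity is not available; moreover on $[s_0,\infty)$ the relevant supremum of $\Theta$ may be its limit at $+\infty$ rather than an attained value, so one cannot simply push a large coordinate down to $s_0$ and must instead compare with $\mathcal D_1(H,a)$ itself, which—thanks to $\mathcal D_1(H,0)s_0/(a^2+s_0)=\mathcal D_1(H,a)$—is already realized inside $[0,s_0]^N$ at $(\sqrt{s_0},0,\dots,0)$. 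The monotonicity of $h$ and this bookkeeping are the only subtle ingredients; Step 2 is a standard convex-function-on-a-polytope argument.
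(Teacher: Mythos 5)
Your proof is correct, and while it shares the paper's overall two--step architecture (first confine all coordinates to $|\eta_i|\le\eta_0(a)$, then show at most one coordinate can sit strictly inside $(0,\eta_0(a))$), both steps are implemented differently. For the first step the paper computes $\partial_{\eta_i}\widetilde{\mathcal H}_N$ directly and asserts it is negative whenever $|\eta_i|>\eta_0(a)$; your analysis of $\Theta$ shows this cannot be literally true in general, since $h(s)\to \mathcal D_1(H,a)\sigma>0$ forces $\Theta$ (hence $\widetilde{\mathcal H}_N$) to be eventually \emph{increasing} in $\eta_i$, with supremum possibly only approached as $\eta_i\to\infty$. Your workaround --- identifying $\sup_{s\ge s_0}\Theta$ as $\max\{\Theta(s_0),\mathcal D_1(H,a)\}$ via the monotonicity of $h$, and then observing that $\mathcal D_1(H,a)=\mathcal D_1(H,0)s_0/(a^2+s_0)$ is already realized at $\vec s=(s_0,0,\dots,0)$ inside the box --- is therefore not just an alternative but a genuine repair of this point, at the cost of an extra comparison and an iteration over coordinates. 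For the second step, the paper studies the explicit two--variable function $L(x,y)=x^2e^{\pi x^2}+y^2e^{\pi y^2}$ on circles, whereas you invoke the standard fact that a strictly convex function on a box intersected with a hyperplane is maximized at an extreme point, which has at most one coordinate off the faces; this is cleaner and generalizes immediately. Two cosmetic remarks: the point you write as $F(\sqrt{s_0},0,\dots,0)$ should read $F(s_0,0,\dots,0)$ in your $s$--coordinates (i.e.\ $\veta=(\eta_0(a),0,\dots,0)$), and the degenerate cases $a=0$ or $\mathcal D_1(H,a)=\mathcal D_1(H,0)$, which you flag but defer, are equally untreated in the paper and harmless for the application.
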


\begin{proof} Let
\[
 \widetilde{\cal H}_N(a,\veta)=\frac{\sum_{i=1}^N\widetilde 
 H(\eta_i)\Gamma_{N-1}(\veta^i)}{a^2+|\veta|^2}
\]
and suppose $ \veta$ has $\vert \eta_i \vert > \eta_0(a)$ 
for some $i$. By differentiating we get 
\[
\partial_{\eta_i} \widetilde{\cal 
H}_N(a,\veta)=\partial_{\eta_i}\left(\widetilde 
H(a,\eta_i)e^{\pi\eta_i^2}\right)\frac{\G_N(\veta)
}{a^2+\veta^2}-2\eta_i\left(\pi+\frac{1}{a^2+
\veta^2}\right)\widetilde{\cal H}_N(a,\veta)
\]
where we used
\[
 \partial_{\eta}\left(\widetilde 
H(a,\eta)e^{\pi\eta^2}\right)=2\eta\left(\pi\widetilde H(a,\eta)+{\cal 
D}_1(H, a)\right)e^{\pi\eta^2}
\]
whenever $\eta\geq\eta_0(a)$. Because
\[
\frac{\widetilde H(a,\eta_i)\G_{N-1}(\veta^i)
}{a^2+\veta^2}\leq \widetilde{\cal H}_N(a,\veta)\qquad\hbox{and} 
\qquad\frac{{\cal D}_1(\widetilde H, a)\G_{N-1}(\veta^i)
}{a^2+\veta^2}\leq \frac{{\cal 
D}_N(\widetilde H, a)}{a^2+\veta^2}\ ,
\]
with equality holding only if $\veta^i=0$, we have
\[
 \partial_{\eta_i} \widetilde{\cal H}_N(a,\veta) <0.
\]
 This implies that 
\[
\sup_{\veta\not=0}\widetilde{\cal 
H}_N(a,\veta)=\sup_{\veta\not=0,|\eta_i|\leq\eta_0}\widetilde{\cal 
H}_N(a,\veta)\, .
\]

Now we show that there can be at most $1$ coordinate $i$ such that $0< \vert 
\eta_i \vert < \eta_0(a)$. Consider now
\[
 L(x,y):=x^2e^{\pi x^2}+y^2e^{\pi y^2}
\]
and observe that $L(r\cos\theta,r\sin\theta)$ is maximal for 
$\theta=n\frac\pi2$ and minimal for $\theta=\frac\pi4+n\frac\pi2$. Moreover, it 
is strictly increasing for $\frac\pi4+n\frac\pi2<\theta<(n+1)\frac\pi2$ and strictly 
decreasing for $n\frac{\pi}{2}<\theta<n\frac{\pi}{2}+ \frac{\pi}{4}$.
For $|\eta_i|\leq \eta_0(a)$ we have
\[
 \widetilde{\cal H}_N(a,\veta)=\frac 
{{\cal D}_1(H,a)L(\eta_1,\eta_2)\Gamma_{N-2}(\eta_3\ldots,\eta_N)+\sum_{i=3}^N 
\widetilde H(a,\eta_i)\Gamma_{N-1}(\veta^i)}{ a^2+|\veta^2|}\, ,
\]
so that there can be no maximum for $\widetilde{\cal H}_N(a,\veta)$ for which 
both $0<\eta_1<\eta_0(a)$ and $0<\eta_2<\eta_0(a)$. Repeating this argument 
for each pair $\eta_i,\eta_j$ with $1\leq i,j\leq N$ we get that for all but possibly one $i$, 
we must have $\eta_i=0$ or $\eta_i=\eta_0(a)$.
\end{proof}

To complete the proof of the first part of Proposition \ref{estim} we will 
simplify the right hand side of equation \ref{eq:keta}. Observe first that
\[
 \frac{k\eta_0(a)^2e^{-\pi((k-1)\eta_0^2(a)+\eta^2)}+\eta^2e^{
-\pi k\eta_0(a)^2}}{a^2+k\eta_0(a)^2+\eta^2}\leq 
\max\left\{\frac{\eta_0^2(a)}{\frac{a^2}2+\eta_0(a)^2}, 
\frac{(k-1)\eta_0(a)^2e^{-\pi((k-1)\eta_0^2(a)+\eta^2)}+\eta^2e^{
-\pi k\eta_0(a)^2}}{\frac{a^2}2+(k-1)\eta_0(a)^2+\eta^2}\right\}.
\]
From \eqref{eta0} we have
\[
 \frac{\eta_0^2(a)}{\frac{a^2}2+\eta_0(a)^2}\leq 2\frac{\mathcal D_1(H, 
a)}{\mathcal D_1(H,0)}
\]
while
\begin{align}
\sup_{k\leq N,|\eta|\leq\eta_0(a)} 
\frac{(k-1)\eta_0(a)^2e^{-\pi((k-1)\eta_0^2(a)+\eta^2)}+\eta^2e^{
-\pi k\eta_0(a)^2}}{\frac{a^2}2+(k-1)\eta_0(a)^2+\eta^2}&\leq \crcr
\sup_{k\leq N,|\eta|\leq\eta_0(a)} 
\frac{((k-1)\eta_0(a)^2+\eta^2)e^{-\pi((k-1)\eta_0^2(a)+\eta^2)}}{\frac{a^2}
2+(k-1)\eta_0(a)^2+\eta^2}&\leq 2\sup_{y>0} \frac{ye^{-\pi 
y}}{\frac{a^2}2+y}
\end{align}
Clearly we have
\[
 \frac{ye^{-\pi y}}{\frac{a^2}2+y}\leq \frac{y}{(\frac{a^2}2+y)(1+\pi y)}\leq
\frac1{\frac{\pi a^2}2+1}
\]
so that 
\begin{equation}\label{N}
 \mathcal D_N(H,a)\leq \max\left\{{\cal D}_1(H, a),2\frac{{\cal D}_1(H, 
0)}{1+\frac{\pi}{2}a^2}\right\}\, .
\end{equation}

\noindent This concludes the first part of the proof. We start the second part with a 
couple of simple observations.
%
%
%
%

From the hypotheses of Proposition \ref{estim}, it follows that

\begin{equation} \label{approx}
\frac{|H''(0)|\eta^2 }{2} - \frac{C_4 \eta^4}{4!} \le |H(\eta)| 
\le 
\frac{|H''(0)|\eta^2 }{2} + \frac{C_4 \eta^4}{4!}  \ .
\end{equation}

  Let now $M=\sup_\eta |H(\eta)|$ and observe that there 
exists a finite $\tilde \eta$ such that $|H(\tilde\eta)|>M/2$. Moreover 
$\tilde\eta\not=0$ since $H(0)=0$. Thus $\mathcal{D}_1(H,0)\geq 
M/(2\tilde\eta^2)$ while
\[
 \frac{|H(\eta)|}{\eta^2}< \frac{M}{2\tilde\eta^2}\qquad\hbox{if}\qquad 
\eta^2> 2\tilde\eta^2
\]
Thus there exists $\eta_m$ such that $\eta_m^2\leq \tilde\eta^2$ and 
$|H(\eta_m)|=\mathcal{D}_1(H,0)\eta_m^2$. We also know from \eqref{basic} that 

\[ |H''(0)|\leq 2\mathcal{D}_1(H,0),\]

\noindent  with equality if and only if $\eta_m^2 = 0$.


\begin{lem}\label{firststep} Under the hypotheses of Proposition \ref{estim} we 
have
\[
\mathcal D_1(H,a) \ge \frac{\mathcal D_1(H,0)^2}{\frac{3}{2}C_4a^2 +4\mathcal 
D_1(H,0)}
\]
\end{lem}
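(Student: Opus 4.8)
The plan is to estimate $D_a:=\mathcal D_1(H,a)$ from below by testing the supremum $\sup_{\eta\neq0}|H(\eta)|/(a^2+\eta^2)$ at a single well-chosen value of $\eta$, treating separately two regimes determined by the size of the point $\eta_m$ (where $|H(\eta_m)|=\mathcal D_1(H,0)\eta_m^2$) constructed just before the statement. Throughout I write $D_0:=\mathcal D_1(H,0)$ and $b:=|H''(0)|$, and I may assume $D_0>0$, since otherwise $H\equiv0$ and the claim is trivial. The cutoff between the two regimes will be
\[
\tau^2:=\frac{D_0 a^2}{\tfrac32 C_4 a^2+3D_0},
\]
which is calibrated so that the two estimates join exactly at the claimed lower bound $D_0^2/(\tfrac32 C_4 a^2+4D_0)$.

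First I would handle the regime $\eta_m^2\ge\tau^2$. Here $\eta_m\neq0$, so taking $\eta=\eta_m$ in the definition of $D_a$ gives $D_a\ge|H(\eta_m)|/(a^2+\eta_m^2)=D_0\eta_m^2/(a^2+\eta_m^2)$; since $s\mapsto D_0 s/(a^2+s)$ is increasing, this is $\ge D_0\tau^2/(a^2+\tau^2)$, and a short computation shows $D_0\tau^2/(a^2+\tau^2)=D_0^2/(\tfrac32 C_4 a^2+4D_0)$. This disposes of the first regime.

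Next, the regime $\eta_m^2<\tau^2$. I would first note $\tau^2<\tfrac{2D_0}{3C_4}$, so inserting $\eta=\eta_m$ into the upper bound of \eqref{approx} (or using $b=2D_0$ directly if $\eta_m=0$) yields $D_0\le\tfrac b2+\tfrac{C_4}{24}\eta_m^2<\tfrac b2+\tfrac{D_0}{36}$, hence $b>\tfrac{35}{18}D_0>D_0$. Then I would test $D_a$ at $\eta^2=6b/C_4$: by the lower bound of \eqref{approx} the numerator $\tfrac b2\eta^2-\tfrac{C_4}{24}\eta^4$ equals $3b^2/(2C_4)>0$, so $D_a\ge 3b^2/\bigl(2(C_4 a^2+6b)\bigr)$. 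Cross-multiplying, the inequality $3b^2/\bigl(2(C_4a^2+6b)\bigr)\ge D_0^2/(\tfrac32 C_4 a^2+4D_0)$ reduces to the two elementary facts $\tfrac92 b^2\ge2D_0^2$ and $b^2\ge D_0 b$, both immediate from $b\ge D_0$. Combining the two regimes gives the lemma.

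The one genuinely delicate step is the choice of test value in the second regime: one has to guess $\eta^2=6b/C_4$, which is the value balancing the quadratic gain $\tfrac b2\eta^2$ against the quartic loss $\tfrac{C_4}{24}\eta^4$ from the Taylor remainder in \eqref{approx}, and then verify the resulting polynomial inequality. Everything else — in particular the first regime and the bookkeeping with $\tau^2$ — is routine, and the calibration of $\tau^2$ is precisely what prevents any loss of constants when the two cases are glued together.
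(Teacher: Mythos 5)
Your argument is correct and rests on the same two ingredients as the paper's own proof: testing the supremum at $\eta_m$ (where $|H(\eta_m)|=\mathcal D_1(H,0)\eta_m^2$) and at $\eta^2=6|H''(0)|/C_4$, with the Taylor bounds \eqref{approx} linking $|H''(0)|$, $\mathcal D_1(H,0)$ and $\eta_m$. The only difference is organizational: the paper combines the two resulting bounds via a min--max over $h=|H''(0)|\in[0,2\mathcal D_1(H,0)]$ using their opposite monotonicities in $h$, whereas you split on whether $\eta_m^2$ exceeds the calibrated threshold $\tau^2$ (forcing $|H''(0)|>\tfrac{35}{18}\mathcal D_1(H,0)$ in the second regime); both routes land on the same constant.
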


\begin{proof} 
From \eqref{approx} it follows that
\[
\frac{|H(a,\eta)|}{a^2+\eta^2} \ge \frac{\frac{|H''(0)|\eta^2 }{2} - 
\frac{C_4 \eta^4}{4!}}{a^2+\eta^2}
\]
and, choosing $\eta^2$ to be $\frac{6 \vert H''(0)\vert }{C_4}$, we get 
that

\begin{equation}\label{up}
\sup_ \eta \frac{|H(a,\eta)|}{a^2+\eta^2} \ge  \frac{ \vert 
H''(0)\vert^2}{ 4  \vert H''(0)\vert+ \frac{3}{2}C_4
a^2}.
\end{equation}
%
%

Since, there is no positive lower bound for $\vert H''(0)\vert$, we complement this inequality using the second inequality in\eqref{approx}. We find that for all $\eta$
\[
|H(\eta)|-\mathcal{D}_1(H,0)\eta^2\le 
\frac{(|H''(0)|-2\mathcal{D}_1(H,0))\eta^2 }{2} + \frac{C_4 
\eta^4}{4!} 
\]
Since $|H''(0)|-2\mathcal{D}_1(H,0)\leq 0$ we get
\[
 \eta_m^2\geq \frac{12(2\mathcal{D}_1(H,0)-|H''(0)|)}{C_4}\ .
\]
%

This implies that
\begin{equation}\label{down}
\sup_ \eta \frac{|H(\eta)|}{a^2+\eta^2} \geq  
\frac{|H(\eta_m)|}{a^2+\eta_m^2}\geq \liminf_{\epsilon\rightarrow 0}  
 \frac{|H(\eta_m)|}{\eta_m^2 +\epsilon} \frac{\eta_m^2}{a^2+\eta_m^2} \ge  
\frac{12\mathcal{D}_1(H,0)(2\mathcal{D}_1(H,0)-|H''(0)|)}{C_4a^2 
+12(2\mathcal{D}_1(H,0)-|H''(0)|)}\ .
\end{equation}

Observe now that the right hand side of \eqref{up} is an increasing function of 
$|H''(0)|$ while the right hand side of \eqref{down} is decreasing.
Thus, we have

\[
 \mathcal{D}_1(H,a)\geq \min_{0\leq h \leq 2\mathcal{D}_1(H,0)} \max \left\{  \frac{ h^2}{ 4  h+ \frac{3}{2}C_4 a^2}\, ,\,
\frac{12\mathcal{D}_1(H,0)(2\mathcal{D}_1(H,0)-h)}{C_4a^2 
+12(2\mathcal{D}_1(H,0)-h)}\right\}
\]

Moreover
\begin{align*}
 \frac{12\mathcal{D}_1(H,0)(2\mathcal{D}_1(H,0)-h)}{C_4a^2 +12 
(2\mathcal{D}_1(H,0)-h)} \geq \frac{12 \mathcal{D}_1(H,0)^2}{ 12 
\mathcal{D}_1(H,0)^2+ C_4 a^2}&\qquad\hbox{for}\qquad &h\leq 
\mathcal{D}_1(H,0)\\
\frac{ 
h^2}{ 4  h+ \frac{3}{2}C_4 a^2} \geq  \frac{\mathcal 
D_1(H,0)^2}{\frac{3}{2}C_4a^2 +4\mathcal D_1(H,0)}&\qquad\hbox{for}\qquad &
h\geq  \mathcal{D}_1(H,0).
\end{align*}

The above, together with the observation 
\[
\frac{\mathcal D_1(H,0)^2}{\frac{3}{2}C_4a^2 +4\mathcal 
D_1(H,0)} \leq \frac{12 \mathcal{D}_1(H,0)^2}{ 12 \mathcal{D}_1(H,0)^2+ C_4 
a^2}\ 
\]
\noindent concludes the proof.
\end{proof}

Observe finally that from $2|H(\eta)|/\eta^2 \leq \sup_{\eta}|H''(\eta)|$ it follows
that $2\mathcal D_1(H,0)\leq\sup_{\eta}|H''(\eta)|\leq C_4$. Thus we can 
write
\begin{equation}\label{1}
 \mathcal D_1(H,a) \ge \frac{2\mathcal D_1(H,0)^2}{C_4}\frac{1}{3a^2 
+4} \ .
\end{equation}

Putting together \eqref{N} and  \eqref{1} establishes the claim of Proposition \ref{estim}.

\end{proof}

To apply Proposition \ref{estim} to \eqref{supeta}, we need to estimate $\Vert 
\widehat G_k(\vxi,\cdot)\Vert_{C^4}$, where $\widehat G_k(\vxi,\eta)$ is 
defined in \eqref{bGG}. Observe that for $p \le 4$ we have by Jensen's inequality
\begin{align*}
\left| \partial_{\eta_j}^p \widehat{R}^I_{i,j}[\widehat 
l_k\Gamma_N](\vxi,\veta)\right|\leq& (2\pi)^4 \int |w_j|^p
R^I_{i,j}[l_k\Gamma_N](\V,\W)d\V d\W \leq  (2\pi)^4 \left(\int |w_j|^4
R^I_{i,j}[l_k\Gamma_N](\V,\W)d\V d\W\right)^{\frac p4}=\\
=&(2\pi)^4\left(\int (w_j^2+v_i^2)^{2}l_k(\V)\Gamma_N(\W)d\V d\W 
\right)^{\frac p4}=(2\pi)^4 
\left(E_{4,k}+2\frac{E_{2,k}}{\sqrt{2\pi}}+\frac3{2\pi}\right)^{\frac p4}\leq 
32\pi^4 ( E_{4,k} + 1)
\end{align*}
where 
\[   
 E_{n,k}=\int v_i^nl_k(\V)d\V=\int v_i^n \left(Q_S+Q_B+\frac{\lambda_R 
N}{2}I\right)^k [l_0](\V)d\V\ .
\]
Using \eqref{bGG} we get
\begin{equation} \label{cee}
\Vert \widehat G_k(\vxi,\cdot)\Vert_{C^4}\leq 32 \pi^4 
\left(E_{4,k}+1\right).
\end{equation}

To estimate $E_{4,k}$ we need to study the action of $Q_S$ and $Q_B^*$ on 
$v_i^4$, where $Q_B^*$ is the adjoint of $Q_B$. This is done in the following 
Lemma.

\begin{prop}\label{Hi} Given a symmetric distribution $l_0$ on 
$\mathbb{R}^M$ 
such that 
\[
 \int v_i^4l_0(\V)d\V=E_4<\infty
\]
we have
\[
 E_{4,k}=\int v_i^4l_k(\V)d\V \leq 2 (E_4+1)
\]
where $l_k=\Lambda^{-k}\left(Q_S+Q_B+\frac{\lambda_R N}{2}I\right)^k l_0$.

\end{prop}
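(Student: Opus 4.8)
The plan is to pass to the adjoint picture and reduce the claim to the boundedness of an explicit, low‑dimensional linear recursion for a few moments of the $l_k$. Since each rotation average $R^S_{i,j}$ is self–adjoint with respect to Lebesgue measure, $Q_S^\ast=Q_S$; and reading the identity $B_i[\G_{M+N}h]=\G_{M+N}T_i[h]$ as a statement about Lebesgue adjoints gives $B_i^\ast=T_i$, hence $Q_B^\ast=Q_T$. Writing $\mathcal B:=Q_S+Q_T+\tfrac{\lambda_R N}{2}I$ and $\langle F\rangle_k:=\int F(\V)\,l_k(\V)\,d\V$ (note $l_k$ stays symmetric, since $Q_S$ and $Q_B$ commute with permutations), duality yields
\[
 E_{4,k}=\langle v_1^4\rangle_k=\big\langle(\Lambda^{-1}\mathcal B)^k v_1^4,\,l_0\big\rangle ,
\]
so it is enough to track how $\Lambda^{-1}\mathcal B$ propagates fourth moments.

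I would then compute the action of $Q_S$ and $Q_T$ on $v_1^4$, $v_1^2 v_j^2$, $v_1^2$ and $1$, using $\dashint\cos^4\theta\,d\theta=\tfrac38$, $\dashint\cos^2\theta\sin^2\theta\,d\theta=\tfrac18$ (odd averages vanishing) and the Maxwellian moments $\int w^2\G_1(w)\,dw=\tfrac1{2\pi}$, $\int w^4\G_1(w)\,dw=\tfrac3{4\pi^2}$ (recall $\beta=2\pi$): for instance $R^S_{i,j}v_1^4=\tfrac38(v_i^2+v_j^2)^2$ when $1\in\{i,j\}$ and $v_1^4$ otherwise, and $T_1 v_1^4=\tfrac38 v_1^4+\tfrac3{8\pi}v_1^2+\tfrac9{32\pi^2}$. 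Hence the span of $\{\textstyle\sum_i v_i^4,\ \sum_{i\ne j}v_i^2 v_j^2,\ \sum_i v_i^2,\ 1\}$ is $\mathcal B$–invariant, and against the symmetric $l_k$ this closes into an affine recursion for $a_k:=\langle v_1^4\rangle_k$, $b_k:=\langle v_1^2 v_2^2\rangle_k$, $c_k:=\langle v_1^2\rangle_k$, of the schematic form $a_{k+1}=(1-\kappa)a_k+\sigma b_k+(\text{term in }c_k)$, $b_{k+1}=\varepsilon a_k+(1-\rho)b_k+(\text{term in }c_k)$, $c_{k+1}=(1-\tfrac\mu{2\Lambda})c_k+\tfrac\mu{4\pi\Lambda}$, with $\kappa,\rho,\sigma>0$ and $\varepsilon=O(1/M)$. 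Its common fixed point is $\big(\tfrac3{4\pi^2},\tfrac1{4\pi^2},\tfrac1{2\pi}\big)$ — the second and fourth moments of $\G_M$ — as it must be, since $\Lambda^{-1}(Q_S+Q_B+\tfrac{\lambda_R N}{2}I)$ drives $l_k$ to the Maxwellian.

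The $c_k$ line decouples: $c_k-\tfrac1{2\pi}=(1-\tfrac\mu{2\Lambda})^k(E_2-\tfrac1{2\pi})$, so $c_k\le\max\{E_2,\tfrac1{2\pi}\}$ (and $E_2\le\sqrt{E_4}$ by Jensen). It remains to bound the coupled pair $(a_k,b_k)$, and this is the heart of the matter: $a_k$ by itself is \emph{not} monotone — collisions decorrelate the velocities, and with the crude estimate $b_k\le a_k$ the diagonal coefficient of the $a$–line exceeds $1$ (a single collision can raise $\langle v_1^4\rangle$ by a factor $\tfrac32$), so no one‑variable substitute both closes and contracts. I would instead pass to the variable $m_k:=a_k+(M-1)b_k=\tfrac1M\langle(\textstyle\sum_i v_i^2)^2\rangle_k$, which $Q_S$ leaves invariant after normalisation: the $m$–line then involves only $\mu$ and is driven by $a_k$ and $c_k$, while the $a$–line now has a genuine contraction rate ($<\Lambda$) and is fed by $m_k$ only through the small coefficient $\tfrac{3\lambda_S}{4(M-1)}$. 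Since $\langle(\sum_i v_i^2)^2\rangle_0\le M^2 E_4$ and $\langle(\sum_i v_i^2)^2\rangle_k$ relaxes towards its Maxwellian value $\tfrac{M(M+2)}{4\pi^2}$, one gets $m_k\le CM(E_4+1)$, whose order‑$M$ size is absorbed by the order‑$1/M$ feedback coefficient; a simultaneous induction $a_k\le A$, $m_k\le\mathcal M$ (or, equivalently, diagonalising the $2\times2$ block, whose eigenvalues turn out real, nonnegative and $<1$, and reading off the explicit solution) then shows $a_k$ stays between $E_4$ and the Maxwellian value up to $c_k$–controlled sources, and collecting the universal constants yields $E_{4,k}=a_k\le 2(E_4+1)$.

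The main obstacle is exactly this last step: because the pair moment $b_k$ feeds back into $a_{k+1}$ with an $O(1)$ coefficient, one is forced to follow the coupled system rather than a single scalar recursion, and the stabilisation relies on the $1/M$ asymmetry of the cross terms; keeping the resulting constant as sharp as $2$ is what makes the bookkeeping of the collision identities and of $\langle(\sum_i v_i^2)^2\rangle_0$ delicate.
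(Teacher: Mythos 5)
Your first half is exactly the paper's: pass to the adjoint ($\overline Q_S$ self-adjoint, $\overline Q_B^{\,*}=\overline Q_T$), note that the symmetric even polynomials of degree at most $4$ form an invariant subspace, and reduce $E_{4,k}$ to a finite-dimensional linear recursion. Your moment computations ($T_1v_1^4=\tfrac38 v_1^4+\tfrac3{8\pi}v_1^2+\tfrac9{32\pi^2}$, $R^S_{i,j}v_1^4=\tfrac38(v_i^2+v_j^2)^2$, the decoupled $c_k$-line and its fixed point $\tfrac1{2\pi}$) are all correct. Where you diverge is in how the recursion is shown to stay bounded, and this is where your argument has a genuine gap. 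You work in the monomial basis, where (as you correctly observe) the $a$-line is not a contraction, and you propose to repair this by passing to the conserved combination $m_k=a_k+(M-1)b_k$ and running a two-variable induction. But none of this is actually carried out: the coefficients $\kappa,\rho,\sigma,\varepsilon$ are never computed, the claim that the resulting $2\times2$ block has real nonnegative eigenvalues strictly below $1$ is asserted rather than verified, the constant $C$ in $m_k\leq CM(E_4+1)$ is unspecified, and the final bound $a_k\leq 2(E_4+1)$ — which is the entire quantitative content of the proposition — is simply announced. Since the feedback of $m_k$ into the $a$-line is $O(1/M)$ per step but is summed against a geometric series with ratio $1-O(1/M)$, the net contribution is $O(1)$ with a constant that depends on all of the unverified quantities; whether it lands below $2$ cannot be taken on faith.

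The paper sidesteps this difficulty entirely by choosing the symmetrized Hermite basis $\{\mathcal H_4,\mathcal H_3,\mathcal H_2,\mathcal H_0\}$: there, by the results of \cite{BLV}, the matrices $L_S$ and $L_T$ representing $\overline Q_S$ and $\overline Q_T$ are positive definite with spectral norm $1$, so the convex combination $L$ satisfies $|\vec a_k|=|L^k\vec a|\leq|\vec a|\leq\sqrt2$ with no recursion analysis at all; the bound $2(E_4+1)$ then drops out of two applications of Cauchy--Schwarz. If you want to complete your route, you must either do the full bookkeeping of the coupled system with explicit constants, or (better) diagonalize as the paper does — the non-contractivity you ran into in the monomial basis is precisely the signal that you are in the wrong basis.
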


\begin{proof}
First we observe that, due to symmetry,
\[
E_{4,k}=\int \frac{1}{M}\sum_{i=1}^M v_i^4l_k(\V)d\V.
\]
Calling
\[
 \overline Q_S :=\frac{1}{\binom{M}{2}}\sum_{i<j} 
R^S_{i,j}=\frac{2}{\lambda_S M}Q_S\qquad,\qquad \overline Q_B := \frac{1}{M} 
\sum_{i=1}^M B_i=\frac{1}{\mu M} Q_B
\]
we have that
\[
 \int v_i^4 \overline Q_S[l](\V)d\V=\int \overline Q_S[v_i^4] l(\V)d\V\qquad
 \int v_i^4 \overline Q_B[l](\V)d\V=\int \overline Q_T[v_i^4] l(\V)d\
\]
where
\[
 \overline Q_T := \frac{1}{M}\sum_{i=1}^M T_i
\]
with $T_i$ defined in \eqref{Ti}. It is easy to see that $\overline Q_S$ and 
$\overline Q_T$ leave the space $V$ of even polynomials of degree at 
most 4 in the $v_i$ invariant. Calling $H_n(v)$ the monic Hermite polynomial of 
degree
$n$ 
(with weight $\Gamma_1(v)=e^{-\pi v^2}$), a natural basis in $V$ is given by
\[
 {\cal H}_4(\V)=\frac{1}{M}\sum_{i=1}^M H_4(v_i)\qquad  {\cal 
H}_3(\V)=\frac{2}{M(M-1)}\sum_{i<j} H_2(v_i)H_2(v_j)\qquad  {\cal 
H}_2(\V)=\frac{1}{M}\sum_{i=1}^M H_2(v_i)\qquad {\cal H}_0(\V)=1
\]
and we have
\begin{eqnarray*}
 \frac{1}{M}\sum_{i=1}^M v_i^4 & =& a_4 {\cal H}_4(\V)+a_3 {\cal 
H}_3(\V)+a_2{\cal H}_2(\V)+a_0{\cal H}_0(\V)\\ 
\end{eqnarray*}
where $\vec a=(a_4,a_3,.a_2,a_0)=(1,0,\frac{3}{\pi},\frac{3}{4\pi^2})$ and 
$|\vec a|\leq \sqrt{2}$. From \cite{BLV} we know that the action of $\overline 
Q_S$ and $\overline Q_T$ on $V$ with the basis ${\cal H}_i$ is given by two 
positive definite matrices $L_S$ and $L_T$ with spectral (and thus $L^2$) norm 1. Thus also the action of 
$\Lambda^{-k}\left(Q_S+Q_T+\frac{\lambda_R N}{2}I\right)$ is given by a 
positive definite matrix $L$ with norm 1. Thus we get
\[
 \Lambda^{-k}\left(Q_S+Q_T+\frac{\lambda_R 
N}{2}I\right)\left(\frac{1}{M}\sum_{i=1}^M v_i^4\right)=a_{4,k} 
{\cal H}_4(\V)+a_{3,k} {\cal H}_3(\V)+a_{2,k}{\cal H}_2(\V)+a_{0,k}{\cal H}_0(\V)
\]
where $\vec a_k=L^k\vec a$. Clearly we have $\vert\vec a_k\vert \leq 
\vert\vec a\vert \leq \sqrt{2}$. We integrate both sides against 
$l_0(\V)$ to obtain 

\begin{eqnarray*}
E_{4,k} & = & a_{4,k} \left(E_4 - \frac{3}{\pi} E_2 + \frac{3}{4\pi^2}\right) + 
a_{3,k} \left(E_3 -\frac{1}{\pi} E_2 + \frac{1}{4\pi^2}\right)+ a_{2,k} 
\left(E_2 - \frac{1}{2\pi}\right) + a_{0,k}\\
\end{eqnarray*}
where 
\[
 E_2=\int v_i^2 l_0(\V)d\V\leq \frac{1}{2}\left( 1+ E_4 \right)\qquad\qquad E_3=\int v_i^2 v_j^2 l_0(\V)d\V\leq E_4.
\]
After some rearranging and neglecting terms with negative coefficients, we 
obtain
\begin{eqnarray*}
 E_{4,k} & \leq & E_4 \left(  \left(1 - \frac{3}{2\pi}\right)a_{4,k} +  
\left(1 - \frac{1}{2\pi}\right)a_{3,k} + \frac{1}{2}a_{2,k}   \right) 
+ \left( a_{0.k}+  \left(\frac{1}{2} - \frac{1}{2\pi}\right)a_{2,k}\right) \\
& \leq & |\vec a| \left(E_4 \sqrt{ \left(1- 
\frac{3}{2\pi}\right)^2+\left(1-\frac{1}{2\pi}\right)^2+ \frac{1}{4}} + \sqrt{ 
1 + \left(\frac12 - \frac{1}{2\pi}\right)^2}\right)\\
\end{eqnarray*}

\noindent proving the result. Here we applied Cauchy-Schwarz inequalities 
twice in the last step.
\end{proof}

It thus follows from \eqref{cee} that
\begin{equation} \label{ceefour}
 \|G_k(\vxi,\cdot)\|_{C^4}\leq 96 \pi^4 (E_4+1):=2F_4.
\end{equation}
Applying Proposition \ref{Hi} and Proposition \ref{estim} to \eqref{GG}, \eqref{supeta} and using \eqref{ceefour} we get that
\begin{equation}\label{wow}
 d_2(Q_I[l_k\Gamma_N],Q_B[l_k\Gamma_N])\leq \frac{\mu 
M}{N}\sqrt{\bigl( 2K F_4+(\mu M)^{-1}d_2(Q_B[l_k\Gamma_1],Q_I[l_k\Gamma_1])\bigr)(\mu M)^{-1}d_2(Q_B[
l_k\Gamma_1 ] , Q_I [ l_k\Gamma_1])} \ ,
\end{equation}
where $K$ is defined in Theorem \ref{thd2}.
It is easy to see that
\begin{equation}\label{1to0}
\frac 1 M d_2(Q_I[l_k\Gamma_1],Q_B[l_k\Gamma_1])\leq 
d_2(M^{-1}Q_I[l_k\Gamma_1],\mu \G_{M+1})+d_2(M^{-1}Q_B[l_k\Gamma_1],\mu 
\G_{M+1})\leq 
2\mu d_2(l_k,\G_{M}).
\end{equation}
Combining \eqref{wow} and \eqref{1to0} gives  
\begin{equation}\label{middled2}
 d_2(Q_I[l_k\Gamma_N],Q_B[l_k\Gamma_N])\leq 2 \frac{\mu 
M}{N}\sqrt{(8F_4+d_2(l_k,\G_{M}))d_2(l_k,\Gamma_M)}
\end{equation}

We can now conclude our proof. Indeed, going back to eq\eqref{expad22}, we can 
write
\begin{align*}
d_2\left(e^{\L t}f_0,e^{\widetilde \L t}f_0\right)&\leq 2 \frac{\mu 
M}{N}e^{-\Lambda t}\sum_{n=1}^\infty 
\frac{t^n}{n!}\sum_{k=0}^{n-1}\Lambda^{n-1}\sqrt{(8 F_4+d_2(l_k,\G_{M}
))d_2(l_k , \Gamma_M)}\crcr
&\leq 2\frac{\mu M}{N}e^{-\Lambda 
t}\sum_{n=1}^\infty 
\frac{t^n\Lambda^{n-1}}{n!}\sum_{k=0}^{n-1}\left(1-\frac{\mu}{2\Lambda}\right)^{
\frac k2}\sqrt{(8 F_4+d_2(l_0,\G_{M}))d_2(l_0 , \Gamma_M)}\\
&= 8\frac{M}{N}\left(1-e^{-\frac\mu4 
t}\right)\sqrt{(8F_4+d_2(l_0,\G_{M}))d_2(l_0 , \Gamma_M)}
\end{align*}
where we have used \eqref{befored2} in Lemma \ref{RB} together 
with $\left(1-\frac{\mu}{2\Lambda}\right)^{
\frac 12}\leq 1-\frac{\mu}{4\Lambda}$.

\section{Conclusions and Outlooks}\label{conc}

We have shown that a {\it small} system out of equilibrium interacting with a 
{\it large} system initially in equilibrium (the reservoir) can be
well approximated in certain norms by a the same small system interacting with
a thermostat. This approximation moreover is uniform in time. Our proof is
not based on a projection or conditioning method. Indeed, it is hard to see how
one can apply such an argument to the $d_2$ metric. In particular we obtain
that also the reservoir remains uniformly close to the equilibrium state.

We can also think of our system as describing a local perturbation in a large
system initially in equilibrium at a given temperature. In this spirit we see
our results as an initial attempt to understand the return to equilibrium from
an initial state that is locally close to equilibrium. We hope to come back on
this problem on forthcoming research.

In the case of the $L^2$ norm introduced in section \ref{L2}, the derivation of 
the above approximation is rather direct. We believe that this is at least in
part due to the fact that the generators $\L$ (see \eqref{genr}) and $\overline
\L$ (see \eqref{gent}) both have a spectral gap uniform in $N$. This implies
that both systems approach exponentially fast to their respective steady states
$f_{\rm \infty}$ and $\tilde f_{\rm \infty}$, \eqref{ssr} and \eqref{sst}.
Notwithstanding this, such a norm behaves poorly with the size of the system and
it excludes altogether perfectly reasonable initial states.

Partly for this reason we have studied the $d_2$ metric defined in \eqref{dd2}.
Such a metric is well defined for all reasonable initial states and behaves much
better as a function of the size of the system. The control of this norm is
harder. The main ingredient is contained in Proposition \ref{estim} in section
\ref{pd2ev}. It requires an extra fourth moment assumption on the initial state and
some substantial analysis of an associated functional inequality.

It is not hard to show that $ e^{\widetilde{\cal L} t}f_0$ approaches $\tilde
f_\infty$ exponentially fast in the $d_2$ metric (see \cite{Evans2016, CLM}). 
On the other
hand, it is an open question whether $e^{{\cal L} t}f_0$ approaches $f_\infty$
exponentially fast in the $d_2$ metric at a rate uniform in $N$. Our
result is not enough to give an answer but it makes such a question rather natural.

Finally in \cite{CLM}, the authors consider a system interacting with more then 
one thermostat. They start at the level of the Boltzmann equation but it would 
be interesting to see in which sense one can approximate such a system with a
system interacting with several large but finite reservoirs at different 
temperatures. Observe that in such a case, if the reservoirs are kept finite, 
they will reach a steady state in which they all have the same temperature (or 
better, average kinetic energy). This will create a more complex and 
interesting interplay between the large $N$ and large $t$ limit, with more than 
one  time scale involved. 

\appendix

\section{Estimates on the Steady States}

In this Appendix we derive \eqref{L2eq} and \eqref{d2eq}.

\subsection{Derivation of \eqref{L2eq}}\label{sL2eq}

Because $h_{\infty}$ depends only on $r=\sqrt{|\V|^2+|\W|^2}$ we can set
\[
 H(r)=h_{\infty}(\V,\W)
\]
Moreover, setting 
\[
w_j = \tilde w_j \sqrt{r^2 - |\V|^2} 
\]
we get $r^2 - |\W|^2 =(r^2 - |\V|^2) (1 -|\tilde \W|^2)$ and
\[
H(r) =
\frac{2}{|\Sp^{M+N-1}|r^{M+N-1}} \int_{|\V|^2 \le r^2} h_0(\V)   r  \left(r^2 - 
|\V|^2\right)^{\frac{N-2}2} d \V
\int_{ \sum_{i \le N-1} w_i^2 \le 1} \frac{1}  {\sqrt{
1 - \sum_{j=1}^{N-1} \tilde w_j^2}} 
d \tilde w_1 \cdots d \tilde w_{N-1}
\]
so that we have
\[
H(r) = \frac{ |\Sp^{N-1}|}{|\Sp^{M+N-1}|r^M} \int_{\R^M} h_0( \V) 
\left(1- 
\frac{|\V|^2}{r^2} \right)_+^{(N-2)/2} 
 d \V 
\]
where $(x)_+=x$ if $x\geq 0$ and $(x)_+=0$ otherwise. Because $\int 
\G_N(\V) h_0(\V) d\V =1$  and
\[
 \frac{ |\Sp^{N-1}|}{|\Sp^{M+N-1}|r^M} \int_{\R^M}\left(1- 
\frac{|\V|^2}{r^2} \right)_+^{(N-2)/2} d \V=1
\]
we may write
\begin{align*}
H(r)-1 =& \int_{\R^M}  \left[ \frac{ |\Sp^{N-1}|}{|\Sp^{M+N-1}|r^M}  
\left(1- 
\frac{ 
|\V|^2}{r^2} \right)_+^{(N-2)/2}  - \G_N(\V)\right] (h_0(\V) -1)d \V\\
=&\int_{\R^M}  \left[ \frac{ |\Sp^{N-1}|}{|\Sp^{M+N-1}|r^M}  \left(1- 
\frac{ 
|\V|^2}{r^2} \right)_+^{(N-2)/2}e^{\pi|\V|^2/2}  - e^{-\pi|\V|^2/2}\right] 
e^{-\pi|\V|^2/2} (h_0(\V)-1) d \V
\end{align*}
and using Cauchy-Schwarz's inequality we find that
\[
|H(r) -1|^2 \le \int_{\R^M} \G_N(\V)( h_0(\V)-1)^2 d \V
 \int_{\R^M} \left[ \frac{ |\Sp^{N-1}|}{|\Sp^{M+N-1}|r^M}  \left(1- 
 \frac{ |\V|^2}{r^2} \right)_+^{(N-2)/2}e^{\pi|\V|^2/2}  - 
e^{-\pi|\V|^2/2}\right]^2 d\V \ .
\]
Thus, we get
\[
 \|h_{\infty}-1\|^2=|\Sp^{M+N-1}|\int r^{M+N-1} e^{-\pi r^2} |H(r)-1|dr\leq 
C\|h\|^2_2
\]
where
\[
C=|\Sp^{M+N-1}| \int_0^\infty  d r r^{M+N-1} e^{-\pi r^2} \int_{\R^M}  \left[ 
\frac{ |\Sp^{N-1}|}{|\Sp^{M+N-1}|r^M}  \left(1- \frac{ |\V|^2}{r^2} 
\right)_+^{(N-2)/2}e^{\pi|\V|^2/2}  - e^{-\pi|\vec v|^2/2}\right]^2 d \V 
\]
By expanding the square, we can write the above integral as a sum of three 
integrals that can be computed explicitly as
\begin{align}
\int_0^\infty  d r r^{M+N-1} e^{-\pi r^2} \int_{\R^M}  \frac{ 
|\Sp^{N-1}|^2}{|\Sp^{M+N-1}| r^{2M}}  \left(1- \frac{ |\V|^2}{r^2} 
\right)_+^{(N-2)}e^{\pi|\V|^2} d \V&= \frac{ \Gamma(\frac{M+N}{2}) 
}{\Gamma(\frac{N}{2}) \Gamma(\frac{M}{2})}   
\frac{\Gamma(\frac{N-2}{2}) \Gamma(\frac{M}{2})}{\Gamma(\frac{M+N-2}{2})} = 
\frac{M+N-2}{N-2} \ ,  \crcr
\int_0^\infty  d r r^{M+N-1} e^{-\pi r^2} \int_{\R^M}   \frac{ 
|\Sp^{N-1}|}{r^M}  \left(1- \frac{ |\V|^2}{r^2} \right)_+^{(N-2)/2} 
d\V&=1 \ ,\crcr
|\Sp^{M+N-1}| \int_0^\infty  d r r^{M+N-1} e^{-\pi r^2} \int_{\R^M}  
e^{-\pi|\V|^2} d \V&=1 \ .
\end{align}
We thus get 
\[
C=\frac{M}{N-2} \ .
\]

\subsection{Derivation of \eqref{d2eq}}\label{sd2eq}

Calling $r^2=|\vxi|^2+|\veta|^2$, we have
\begin{align*}
d_2(f_{\infty},\Gamma_{M+N}) =& \sup_{r\not= 0}\int_{\Sp^{M+N-1}(r)}
\frac{[ \widehat l_0(\vxi) - \Gamma_M(\vxi) 
]}{r^2}\Gamma_N (\veta)  d \sigma_r(\vxi,\veta)\crcr
\le& \left(\sup_{ r\not= 0}\int_{\Sp^{M+N-1}(r)}\frac{|\vxi|^2}{r^2} 
\Gamma_N (\veta)  d \sigma_r (\vxi,\veta)\right) d_2(l_0,\Gamma_M)
\end{align*}
Observe now that
\begin{align*}
\int_{\Sp^{M+N-1}(r)}\frac{|\vxi|^2}{r^2} \Gamma_N (\veta)  d \sigma_r
(\vxi,\veta)&=\int_{\Sp^{M+N-1}(1)}|\vxi|^2 
\gamma\left(r^2(1-|\vxi|^2)\right)d\sigma_1 (\vxi,\veta)\leq
\frac{|\Sp^{N-1}|}{|\Sp^{M+N-1}|}\int_{|\vxi|^2 \le 1}|\vxi|^2\left(1 -  
|\vxi|^2\right)^{\frac{N-2}{2}}d\vxi\leq\crcr
&\leq\frac{|\Sp^{N-1}||\Sp^{M-1}|}{|\Sp^{M+N-1}|}\int_0^1\rho^{M+1}\left(1 -  
\rho^2 \right)^{\frac{N-2}{2}} d \rho=\frac12\frac{|\Sp^{N-1}| 
|\Sp^{M-1}|}{|\Sp^{M+N-1}|} \int_0^1 s^{\frac M2} (1 -  s 
)^{\frac N2-1} ds=\crcr
=& \frac12\frac{ 2 \pi^{\frac M2} 2\pi^{\frac N2} 
\Gamma\left(\frac{M+N}2\right)}{\Gamma\left(\frac M2\right) 
\Gamma\left(\frac N2\right)2 \pi^{\frac{M+N}2}}\frac{\Gamma\left(\frac 
M2+1\right) \Gamma\left(\frac N2\right) 
}{ \Gamma\left(\frac{M+N}2+1\right) } = \frac{M }{ M+N }. 
\end{align*}

\section{Optimality of the estimate \eqref{middleL2}}\label{M}

In this appendix we show that there exists an initial state $u_0$ for which we have
\[
 \|(Q_I-Q_T)u_0\|_2\geq C \frac{M}{\sqrt{N}}\|u_0\|_2.
\]
thus saturating the bound in Lemma \ref{fl}.
We first observe that, by a similar analysis as Lemma \ref{fl}, we get
\[
\left\Vert \sum_{i=1}^M\left(\frac{1}{N} \sum_{j=1}^N R^I_{i,j} u - T_i u\right)
\right\Vert^2_2  = 
\frac{M}{N}\left(\langle T_1 u, u\rangle - \langle T_1 u,T_1 
u\rangle\right)
+\frac{M(M-1)}{N}\left(\langle R^I_{1,1} u, R^I_{2,1} u\rangle-
\langle T_{1} u, T_{2} u\rangle\right).
\]
We thus need symmetric initial states such that $\langle R^I_{1,1} u, 
R^I_{2,1} u\rangle- \langle T_{1} u, T_{2} u\rangle=O(1)$ in $M$ and $N$.
To this end we set
\[
 u_{M,P}(\V)=\sum_{p_1+p_2+\cdots +p_M=P}\prod_{i=1}^M\,\, H_{2p_i}(v_i)
\]
where $H_p(v)$ is the normalized Hermite polynomial of degree $p$ with weight 
$\gamma(v)=e^{-\pi v^2}$. We get
\[
 R^I_{1,1}u_{M,P}(\V)=\sum_{p_1+p_2\leq 
P}\widetilde H_{2p_1}(v_1,w_1)H_{2p_2}(v_2)u_{M-2,P-p_1-p_2}(\V^{1,2})\ .
\]
where $\widetilde H_{2p}(v,w)$ is the only radially symmetric Hermite 
polynomial of degree $2p$. It follows that
\[
\langle R^I_{1,1} u_{M,P}, R^I_{2,1} u_{M,P}\rangle- \langle T_{1} u_{M,P}, 
T_{2} u_{M,P}\rangle\geq
\left(\langle R^I_{1,1}\bar u,R^I_{2,1}\bar 
u\rangle-\langle T_{1}\bar u,T_{2}\bar 
u\rangle\right)\|u_{P-2,M-2}\|_2
\]
where $\bar u(v_1,v_2)=H_4(v_1)+H_2(v_1)H_2(v_2)+H_4(v_2)$. Observe now that
$\|u_{P,M}\|_2=\binom{M+P}{P-1}$ while $\langle R^I_{1,1}\bar u,R^I_{2,1}\bar 
u\rangle-\langle T_{1}\bar u,T_{2}\bar 
u\rangle=\frac{11}{8}$ so that
\[
\langle R^I_{1,1} u_{M,P}, R^I_{2,1} u_{M,P}\rangle- \langle T_{1} u_{M,P}, 
T_{2} u_{M,P}\rangle\geq
\frac{11}{8}\frac{(P-1)(P-2)(M+1)M}{(M+P)(M+P-1)(M+P-2)(M+P-3)}\|u_
{ M , P } \|_2.
\]
By choosing $P=M$ we get 
\[
 \langle R^I_{1,1} u_{M,M}, R^I_{2,1} u_{M,M}\rangle- \langle T_{1} u_{M,M}, 
T_{2} u_{M,M}\rangle\geq C \|u_{M,M}\|_2
\]
with $C=3/128$.

We can thus consider an initial state given by
\[
 h_0(\V)=1+a u_{M,M}(\V).
\]
Observe that $u_{M,M}$ is an even polynomial in all its variables with positive 
coefficients for the terms of maximal degree. Thus $\inf_{\mathbb{R}^n} 
u_{M,M}(\V)>-\infty$ and choosing $a$ small enough we get $h_0\geq 0$.

Going back to \eqref{fin} we can write
\begin{align*}
\|(e^{\L t}-e^{\overline \L t}) h_0\|_2&\geq \|h_0-1\|_2
\frac{M}{\sqrt{N}}e^{-\Lambda t}\left(Ct-\sum_{n=2}^\infty 
\frac{t^n}{n!}\Lambda^n\left[1-\left(1-\frac{\mu 
}{2\Lambda}\right)^n\right]\right)\\
&\geq\|h_0-1\|_2
\frac{M}{\sqrt{N}}t\left((C+1)e^{-\Lambda t}-1\right) 
\end{align*}
where we have used that $\left[1-\left(1-x\right)^n\right]\leq nx$. Thus for 
this particular $h_0$ our estimate is saturated at least for a time order 
$\Lambda^{-1}$. Since $\Lambda>(\lambda_S/2+\mu)M$ we cannot claim that for 
this example  $\|(e^{\L t}-e^{\overline \L t}) h_0\|_2$ actually grows to order
$M/\sqrt{N}$.

\section{Violation of  ${\cal D}_N(H,a)\leq K {\cal D}_1(H,a)$}\label{P5}
In this appendix we show that there cannot be a constant $K<N$ for which ${\cal 
D}_N(H,a)\leq K {\cal D}_1(H,a)$ holds for every $H$ and $a$.  Consider the 
family of function, parametrized by $r$, given by
\[
H_r(x)= \eta^4 \exp(-r \eta^2).
\]
Then 
\[
{\cal D}_1(H_r,a) = \sup \frac{H_r(\eta)}{a^2+\eta^2} = 
\frac{H_r(\eta(r))}{a^2+ \eta(r)^2}
\]
for some $\eta(r)$ with $\eta(r)^2\leq \frac{2}{r}$, since 
$H_r(\eta)/(a^2+\eta^2)$ is decreasing on $\eta^2>\frac{2}{r}$.
On the other hand, we get
\[
{\cal D}_N(H_r,a) \geq \frac{N \eta(r)^4 \exp(-r \eta(r)^2) \exp(-\pi(N-1) 
\eta(r)^2)}{a^2 + N \eta(r)^2}
\]
so that
\[
\liminf_{r\to\infty}\frac{{\cal D}_N(H_r,a)}{{\cal D}_1(H_r,a)} \geq 
\liminf_{r\rightarrow \infty} N \frac{a^2+ \eta(r)}{ a^2+ 
N\eta(r)^2}\exp(-\pi(N-1)\eta(r)^2)=N.
\]
This bound is optimal since for any $H$ and $a$ we have
\begin{equation}\label{DnD1A}
{\cal D}_N(H,a) \leq \sup_\eta \frac{\sum_{i=1}^N {\cal D}_1(H,a) 
(a^2+\eta^2)}{ a^2+ N\eta^2} \leq N {\cal D}_1(H,a).
\end{equation}


\medskip
\noindent{\bf Acknowledgements}: Michael Loss and Hagop Tossounian acknowledge partial support from the 
NSF grant DMS-1301555. Michael Loss also acknowledges partial support from the NSF grant DMS- 1600560 and the Humboldt Foundation.

\bibliographystyle{plain} 

\end{document}